\documentclass[runningheads,final]{llncs}
\pdfoutput=1 
\usepackage[utf8]{inputenc}
\usepackage{amsmath,amssymb}
 % Fixes conflict between llncs and amsthm.
 % Fixes conflict between llncs and amsthm.
\usepackage{amsthm}
\usepackage{thmtools, thm-restate}
\usepackage{graphicx}
\usepackage{etoolbox}
\usepackage{cite}
\usepackage{mathpartir}
\usepackage{mathtools}
\usepackage{proof}
\usepackage[most]{tcolorbox}
\usepackage{turnstile}
\usepackage{url}
\usepackage{hyperref}
\usepackage{xcolor}
\usepackage{xargs}
\usepackage{paralist}
\usepackage{listings}
\usepackage{caption}
\usepackage{subcaption}
\usepackage{xfrac}
\usepackage{diagbox}
\usepackage{makecell}
\usepackage{arydshln}
\usepackage{csquotes}
\usepackage{multirow}
\usepackage{wrapfig}
\usepackage{ebproof}
\usepackage{ifthen}
\usepackage[capitalise]{cleveref}

\usepackage{paper}

%\input{draft}
%\input{badge}

% BEGIN APPENDIX
\newcommand{\citeappendix}{\cite{extended}}

% Boolean flag which indicates whether to produce an
% appendix (true) or not (false).
\newboolean{appendix}

% Set the boolean flag to indicate whether to produce
% an appendix here.
%\setboolean{appendix}{false}
\setboolean{appendix}{true}

\newcommand{\ifappendixthenelse}[2]{\ifthenelse{\boolean{appendix}}{#1}{#2}}

% \aref will render a clever reference if this
% document has an appendix. Otherwise it will cite
% "extended", which should point at the extended version
% that in turn has an appendix.
\newcommand{\aref}[1]{\ifappendixthenelse{\cref{#1}}{\citeappendix}}

% \aword is to conditionally finish sentences like "see [...]" where
% [...] is to be replaced by "the Appendix" if it is present, or a
% citation if it is not.
\newcommand{\aword}{\ifappendixthenelse{the Appendix}{\citeappendix}}

% END APPENDIX

\definecolor{darkgray}{rgb}{0.31,0.31,0.33}
\definecolor[named]{warmYellow}{rgb}{0.99,0.78,0.07}

\definecolor[named]{keyword}{HTML}{e0004a}% if, then, else, match, with
\definecolor[named]{constructor}{HTML}{009304}% node, leaf, true, false
\definecolor[named]{extension}{HTML}{1e88e5}% coin, tick ("probabilistic extension"), nondeterminism
\definecolor[named]{basic}{HTML}{1b0160}

\newcommand*{\mycommentstyle}[1]{\color{darkgray}\rmfamily\itshape\lstset{columns=fullflexible}{#1}}

% Listings Configuration
\lstdefinelanguage{splay}{%
  basicstyle=\color{basic}\ttfamily\lst@ifdisplaystyle\scriptsize\else\small\fi,
  extendedchars=true,
  alsoletter={*,'},
  keywordstyle=[1]\color{constructor},
  keywordstyle=[2]\color{keyword},
  keywordstyle=[3]\color{extension},
  keywords=[1]{true,false,leaf,node},
  keywords=[2]{if,then,else,match,with,let,in},
  keywords=[3]{coin,nondet},
  sensitive=true,
  moredelim=[is][\mycommentstyle]{(*}{*)},
  mathescape=true,
  numberstyle=\tiny%\color{darkgray}
  %xleftmargin=0pt,
  %xrightmargin=0pt,
  %framexleftmargin=0pt,
  %literate={→}{{\to}}1
}

\lstset{%
  language=splay,%
  numbers=left,%
  numberstyle=\tiny\color{black},%
  numbersep=10pt,%
  %backgroundcolor=\color{white},%
  frame=none,%
  %rulecolor=\color{black},%
  literate=%
    {~} {$\sim$}{1}
    {->} {$\to$}{2},
  framerule=0pt,%
}

\tcbuselibrary{skins,breakable}
%\tcbuselibrary{listings,skins,breakable}

% TODO: adapt as we see fit
\tcbset{colback=white,
  colframe=warmYellow,
  highlight math style= {enhanced, %<-- needed for the ’remember’ options
    colframe=red,colback=red!10!white,boxsep=0pt},
  arc=0mm,
  boxrule=.2mm,
}

\newtcblisting{broken}{
      arc=0mm,
      top=-1mm,
      bottom=-1mm,
      left=0mm,
      right=0mm,
      width=\textwidth,
      boxrule=.2mm,
      colback=white,
      listing only,
      listing options={},
      breakable
}

\lstnewenvironment{code}{}{}

%\spnewtheorem*{lemma*}{Lemma}{\bfseries}{\itshape}
%\spnewtheorem*{theorem*}{Theorem}{\bfseries}{\itshape}

\begin{document}
\title{Automated Expected Amortised Cost Analysis of Probabilistic Data Structures}
\titlerunning{Automated Expected Amortised Cost Analysis}
\author{Lorenz Leutgeb\inst{1}
%\orcidID{https://orcid.org/0000-0003-0391-3430} \and \\
\and
Georg Moser\inst{2}
%\orcidID{https://orcid.org/0000-0001-9240-6128}
\and 
Florian Zuleger\inst{3}
%\orcidID{https://orcid.org/0000-0003-1468-8398}
}
\authorrunning{L.~Leutgeb et al.}
\institute{
% Max Planck Institute for Informatics, Saarland Informatics Campus, Germany \and
% Graduate School of Computer Science, Saarland Informatics Campus, Germany \and
% Department of Computer Science,
% Universität Innsbruck, Austria \and
% Institute of Logic and Computation 192/4,
% Technische Universität Wien, Austria
  Max Planck Institute for Informatics and Graduate\\
  School of Computer Science, %\\
  %Saarland Informatics Campus,
  Saarbrücken, Germany
  \and
  Department of Computer Science\\
  Universität Innsbruck, Austria
  \and
  Institute of Logic and Computation 192/4\\
  Technische Universität Wien, Austria
}
\maketitle

\begin{abstract}
In this paper, we present the first fully-automated \emph{expected amortised cost analysis} of self-adjusting data structures, that is, of \emph{randomised splay trees}, \emph{randomised splay heaps} and \emph{randomised meldable heaps}, which so far have only (semi-) manually been analysed in the literature.
Our analysis is stated as a type-and-effect system for a first-order functional programming language with support for sampling over discrete distributions, non-deterministic choice and a ticking operator. The latter allows for the specification of fine-grained cost models.
We state two soundness theorems based on two different---but strongly related---typing rules of ticking, which account differently for the cost of non-terminating computations.
Finally we provide a prototype implementation able to fully automatically analyse the aforementioned case studies.

\keywords{amortised cost analysis \and
functional programming \and
probabilistic data structures \and
automation \and
constraint solving}
\end{abstract}

\section{Introduction}
\label{sec:intro}
\emph{Probabilistic} variants of well-known computational models such as automata, Turing machines or the $\lambda$-calculus have been studied since the early days of computer
science (see~\cite{Kozen81,Kozen:JCSC:85,MR:1999} for early references).
One of the main reasons for considering probabilistic models is that they often allow for the design of more efficient algorithms than their deterministic counterparts (see  e.g.~\cite{MR:1999,MitzenMacherU05,BKS:2020}).
Another avenue for the design of efficient algorithms has been opened up by Sleator and Tarjan~\cite{ST:1985,Tarjan:1985} with their introduction of the notion of \emph{amortised complexity}.
Here, the cost of a single data structure operation is not analysed in isolation but as part of a sequence of data structure operations.
This allows for the design of algorithms where the cost of an expensive operation is averaged out over multiple operations and results in a good overall \emph{worst-case cost}.
Both  methodologies---\emph{probabilistic programming} and \emph{amortised complexity}---can be combined for the design of even more efficient algorithms, as for example in \emph{randomized splay trees}~\cite{AlbersK02}, where a rotation in the splaying operation is only performed with some probability (which improves the overall performance by skipping some rotations while still guaranteeing that enough rotations are performed).

In this paper, we present the first fully-automated \emph{expected amortised cost analysis} of probabilistic data structures, that is, of \emph{randomised splay trees}, \emph{randomised splay heaps}, \emph{randomised meldable heaps} and a \emph{randomised analysis} of a \emph{binary search tree}.
These data structures have so far only (semi-)manually been analysed in the literature.
Our analysis is based on a novel type-and-effect system, which constitutes a generalisation of the type system studied in~\cite{hofmann2021typebased,LMZ:2021} to the non-deterministic and probabilistic setting, as well as an extension of the type system introduced in~\cite{WangKH20} to sublinear bounds and non-determinism. We provide a prototype implementation that is able to fully automatically analyse the case studies mentioned above.
We summarize here the main contributions of our article:
\begin{inparaenum}[(i)]
\item We consider a first-order functional programming language with support for \emph{sampling over discrete distributions},
    \emph{non-deterministic choice} and a \emph{ticking} operator, which allows for the specification of fine-grained cost models.
\item We introduce compact \emph{small-step} as well as \emph{big-step} semantics for our  programming language.
    These semantics are equivalent wrt. the obtained normal forms (ie., the resulting probability distributions) but differ wrt. the cost assigned to non-terminating computations.
\item Based on~\cite{hofmann2021typebased,LMZ:2021}, we develop a novel type-and-effect system that strictly generalises the prior approaches from the literature.
\item We state two soundness theorems (see Section~\ref{Soundness}) based on two different---but strongly related---typing rules of ticking.
    The two soundness theorems are stated wrt. the small-step resp. big-step semantics because these semantics precisely correspond to the respective ticking rule.
    The more restrictive ticking rule can be used to establish (positive) almost sure termination (AST) while the more permissive ticking rule supports the analysis of a larger set of programs (which can be very useful in case termination is not required or can be established by other means);
    in fact, the more permissive ticking rule is essential for the precise cost analysis of randomised splay trees.    
    We note that the two ticking rules and corresponding soundness theorems do not depend on the details of the type-and-effect system, and we believe that they will be of independent interest (e.g., when adapting the framework of this paper to other benchmarks and cost functions).
\item Our prototype implementation~\atlas\ strictly extends and earlier vrsion discussed in~\cite{LMZ:2021} and all our earlier evaluation results can be replicated (and sometime improved).
\end{inparaenum}

With our implementation and the obtained experimental results we make two contributions to the complexity analysis of data structures:
\begin{enumerate}
\item \emph{We automatically infer bounds on the expected amortised cost, which could previously only be obtained by sophisticated pen-and-paper proofs. 
    In particular, we verify that the amortised costs of randomised variants of self-adjusting data structures improve upon their non-randomised variants.}
  In Table \ref{tab:results} we state the expected cost of the randomised data structures and their deterministic counterparts; the benchmarks are detailed in Section~\ref{sec:overview}.

\item \emph{We establish a novel approach to the expected cost analysis of data structures.}
  %
  % Our research has been greatly motivated by the detailed study of Albers et al.\ in~\cite{AlbersK02} of the expected amortised costs of \emph{randomised splaying}. 
  While the detailed study of Albers et al.\ in~\cite{AlbersK02} requires a sophisticated pen-and-paper analysis,
  our approach allows us to fully-automatically compare the effect of different rotation probabilities on the expected cost (see Table~\ref{tab:matrix} of Section~\ref{sec:implementation}).
\end{enumerate}

\begin{table}[t]
\setlength{\tabcolsep}{0.5em}
\setlength\dashlinedash{0.5pt}
\setlength\dashlinegap{1.0pt}
\centering
\begin{tabular}{|l:l:l|}
\hline
& probabilistic & deterministic~\cite{LMZ:2021} \\
\hline\hline
\multicolumn{3}{|c|}{Splay Tree} \\
\hdashline
\flst{insert} &
$\sfrac{3}{4} \log(|t|) + \sfrac{3}{4} \log(|t|+1)$ &
$2 \log(|t|) + \sfrac{3}{2}$ \\
\flst{delete} &
$\sfrac{9}{8} \log(|t|)$ &
$\sfrac{5}{2} \log(|t|) + 3$ \\
\flst{splay} &
$\sfrac{9}{8} \log(|t|)$ &
$\sfrac{3}{2} \log(|t|)$ \\
\hline\hline
\multicolumn{3}{|c|}{Splay Heap} \\
\hdashline
\flst{insert} &
$\sfrac{3}{4} \log(|h|) + \sfrac{3}{4} \log(|h|+1)$ &
$\sfrac{1}{2} \log(|h|) + \log(|h|+1) + \sfrac{3}{2}$ \\
\flst{delete\_min}  &
$\sfrac{3}{4} \log(|h|)$ &
$\log(|h|)$ \\
\hline\hline
\multicolumn{3}{|c|}{Meldable Heap} \\
\hdashline
\flst{insert} &
$\log(|h|) + 1$ &\\
\flst{delete\_min} &
$2 \log(|h|)$ &
\multicolumn{1}{c|}{\emph{not applicable}}\\
\flst{meld} &
$\log(|h_1|) + \log(|h_2|)$ &\\
\hline\hline
\multicolumn{3}{|c|}{Coin Search Tree} \\
\hdashline
\flst{insert} &
$\sfrac{3}{2} \log(|t|) + \sfrac{1}{2}$ &\\
\flst{delete} &
$\log(|t|)$ &
\multicolumn{1}{c|}{\emph{not applicable}}\\
\flst{delete\_max} &
$\sfrac{3}{2} \log(|t|) + \sfrac{1}{2}$ &\\
\hline
\end{tabular}
\vspace{3mm}
\caption{Expected Amortised Cost of Randomised Data Structures. We also state the deterministic counterparts considered in~\cite{LMZ:2021} for comparison.}
\vspace{-5mm}
\label{tab:results}
\end{table}

\paragraph*{Related Work.}
The generalisation of the model of computation and the study of the expected resource usage of
\emph{probabilistic} programs has recently received increased attention (see e.g.~\cite{CFM:CAV:17,McIverMKK18,KKMO:ACM:18,NCH:2018,BatzKKMN19,WFGCQS:PLDI:19,ADY:2020,AMS:2020,EberlHN20,WangKH20,MeyerHG21,ABD:2021,MoosbruggerBKK21}).
% This is hardly surprising, as many successful data structures rely on randomisations, like e.g.\
% \begin{inparaenum}[(i)]
% \item \emph{randomised meldable heaps}~\cite{GambinM98};
% \item \emph{skip lists}~\cite{Pugh90};
% \item \emph{randomised splay trees}~\cite{Furer99,AlbersK02};
% \item \emph{randomised search trees}~\cite{MartinezR98}; or
% \item \emph{randomised treaps}~\cite{AragonS89,BlellochR98}.
% \end{inparaenum}
%
We focus on related work concerned with automations of expected cost analysis of
deterministic or non-determi\-nistic, probabilistic programs---imperative or functional.
(A probabilistic program is called \emph{non-deterministic}, if it additionally makes use of non-deterministic
choice.)

In recent years the \emph{automation} of expected cost analysis of probabilistic data structures or programs has
gained momentum, cf.~\cite{NCH:2018,AvanziniLG19,WFGCQS:PLDI:19,ADY:2020,AMS:2020,WangKH20,MeyerHG21,ABD:2021,MoosbruggerBKK21}.
Notably, the \Absynth\ prototype by~\cite{NCH:2018}, implement Kaminski's \ert-calculus, cf.~\cite{KKMO:ACM:18} for reasoning about expected costs.
% This \ert-calculus has been generalised to an expected cost transformer
% by Avanzini et al.~\cite{AMS:2020}.
Avanzini et al.~\cite{AMS:2020} introduce the tool~\ecoimp, which generalises
the \Absynth\ prototype and provides a modular and thus a more efficient and scalable alternative
for non-determi\-nistic, probabilistic programs.
In comparison to these works, we base our analysis on a dedicated type system finetuned
to express sublinear bounds; further our prototype implementation~\atlas\ derives
bounds on the expected amortised costs. Neither is supported by \Absynth\ or \ecoimp.
Martingale based techniques have been implemented, e.g.,
by Peixin~Wang et al.~\cite{WFGCQS:PLDI:19}. Related results have been reported by
Moosbrugger et al.~\cite{MoosbruggerBKK21}.
% In~\cite{MoosbruggerBKK21} the tool~\amber\ is introduced; \amber's innovation
% is the clever implementation of proof rules for AST, PAST and non-termination,
% well-studied in the literature, cf.~\cite{ChakarovS13,FioritiH15,McIverMKK18}.
Meyer et al.~\cite{MeyerHG21} provide an extension of the \koat\ tool,
generalising the concept of alternating size and runtime analysis
to probabilistic programs. Again, these innovative tools are not suited to the
benchmarks considered in our work.
With respect to probabilistic \emph{functional} programs, Di Wang et al.~\cite{WangKH20} provided the only prior expected cost analysis of (deterministic) probabilistic programs; this work is most closely related to our contributions. Indeed, our typing rule~\rulecoin\ stems from~\cite{WangKH20} and the soundness proof wrt.\ the
big-step semantics is conceptually similar. Nevertheless, our contributions strictly generalise their results.
First, our core language is based on a simpler semantics, giving rise to cleaner formulations of our soundness theorems. Second, our type-and-effect provides two different typing rules for ticking, a fact we can
capitalise on in additional strength of our prototype implementation.
Finally, our amortised analysis allows for logarithmic potential functions.

A bulk of research concentrates on specific forms of \emph{martingales} or \emph{Lyapunov ranking functions}.
All these works, however, are somewhat orthogonal to our contributions, as foremostly \emph{termination} (ie. AST or PAST) is studied, rather than computational complexity. Still these approaches can be partially
suited to a variety of quantitative program properties, see \cite{TOUH:ATVA:18} for an overview, but
are incomparable in strength to the results established here.

\paragraph*{Structure.}
In the next section, we provide a bird's eye view on our approach. Sections~\ref{sec:language} and~\ref{sec:semantics} detail the core probabilistic language
employed, as well as its small- and big-step semantics. In Section~\ref{sec:typesystem} we
we introduce the novel type-and-effect system formalising and state soundness of the system wrt.\ the respective semantics.
In Section~\ref{sec:implementation} we present evaluation results of our prototype implementation~\atlas.
Finally, we conclude in Section~\ref{sec:conclusion}.

\section{Overview of Our Approach and Results}
\label{sec:overview}
In this section, we first sketch our approach on an introductory example and
then detail the benchmarks and results depicted in Table~\ref{tab:results} in the Introduction.

\subsection{Introductory Example}

Consider the definition of the function \descend, depicted in Figure~\ref{fig:7}.
The \emph{expected} amortised complexity of \descend\ is $\log(\size{t})$, where $\size{t}$ denotes the size of a tree (defined as the number of leaves of the tree).%
\footnote{An amortised analysis may always default to a wort-case analysis. In particular the analysis of~\descend\ in this section can be considered as a worst-case analysis. However, we use the example to illustrate the general setup of our amortised analysis.}
Our analysis is set up in terms of template potential functions with unknown coefficients, which will be instantiated by our analysis.
Following~\cite{hofmann2021typebased,LMZ:2021}, our potential functions are composed of two types of resource functions, which can express \emph{logarithmic} amortised cost: For a sequence of $n$ trees $\seq{t}$ and coefficients $a_i \in \N, b \in \Z$, with $\sum_{i=1}^n a_i + b \geqslant 0$, the resource function
% \begin{equation*}
%   p_{(\seq{a}[m],b)}(\seq{t}[m]) \defsym \log(\seqx{a_\i \cdot \size{t_\i}}[m][+] + b)
%   \tkom
% \end{equation*}
$p_{(\seq{a},b)}(\seq{t}) \defsym \log(\seqx{a_\i \cdot \size{t_\i}}[n][+] + b)$
denotes the logarithm of a linear combination of the sizes of the trees.
The resource function $\rk(t)$, which is a variant of Schoenmakers' potential, cf.~\cite{Schoenmakers92,Schoenmakers93,NipkowB19}, is inductively defined as
\begin{inparaenum}[(i)]
  \item $\rk(\leaf) \defsym 1$;
  \item $\rk(\tree{l}{d}{r}) \defsym \rk(l) + \log(\size{l}) + \log(\size{r}) + \rk(r)$,
\end{inparaenum}
% \begin{equation*}
%   \rk(\leaf) \defsym 1 \qquad
%   \rk(\tree{l}{d}{r}) \defsym \rk(l) + \log(\size{l}) + \log(\size{r}) + \rk(r)
%   \tkom
% \end{equation*}
where $l$, $r$ are the left resp.\ right child of the tree $\tree{l}{d}{r}$, and $d$ is some data element that is ignored by the resource function.
(We note that $\rk(t)$ is not needed for the analysis of \descend\ but is needed for more involved benchmarks, e.g.\ randomised splay trees.)
With these resource functions at hand, our analysis introduces the coefficients $q_\ast$, $q_{(1,0)}$, $q_{(0,2)}$, $q'_\ast$, $q'_{(1,0)}$, $q'_{(0,2)}$ and employs the following \emph{Ansatz}:%
\footnote{For ease of presentation, we elide the underlying semantics for now and simply write ``\lstinline{descend t}'' for the resulting tree $t'$, obtained after evaluating \lstinline{descend t}.}
\begin{gather*}
  q_\ast \cdot \rk(t) + q_{(1,0)} \cdot p_{(1,0)}(t) + q_{(0,2)} \cdot p_{(0,2)}(t) \geqslant c_{\text{\descend}}(t) + {}
  \\
  {} + q'_\ast \rk(\text{\descend}\ t) + q'_{(1,0)} \cdot p_{(1,0)}(\text{\descend}\ t) +
    q'_{(0,2)} \cdot p_{(0,2)}(\text{\descend}\ t)\tpkt
\end{gather*}
Here, $c_{\text{\descend}}(t)$ denotes the expected cost of executing \descend\ on tree $t$, where the cost is given by the ticks as indicated in the source code (each tick accounts for a recursive call).
The result of our analysis will be an instantiation of the coefficients, returning $q_{(1,0)} = 1$ and zero for all other coefficients, which allows to directly read off the logarithmic bound $\log(\size{t})$ of \descend.

Our analysis is formulated as a \emph{type-and-effect system}, introducing the above \emph{template potential functions} for every subexpression of the program under analysis.
The typing rules of our system give rise to a constraint system over the unknown coefficients that capture the relationship between the potential functions of the subexpressions of the program.
Solving the constraint system then gives a valid instantiation of the potential function coefficients.
Our type-and-effect system constitutes a generalisation of the type system studied in~\cite{hofmann2021typebased,LMZ:2021} to the non-deterministic and probabilistic setting, as well as an extension of the type system introduced in~\cite{WangKH20} to sublinear bounds and non-determinism.

\begin{figure}[t]
\centering
\begin{code}
descend t = match t with
  | leaf       -> leaf
  | node l a r -> if coin 1/2 (* Denotes $\color{darkgray} p = \sfrac{1}{2}$, which is default and could be omitted. *)
    then let xl = $\tick{(\flst{\scriptsize descend l})}$ in node xl a r (* The symbol $\tiny \checkmark$ denotes a tick. *)
    else let xr = $\tick{(\flst{\scriptsize descend r})}$ in node l a xr
\end{code}
\vspace{-0.3cm}
\caption{\descend\ function}
\label{fig:7}
\vspace{-0.5cm}
\end{figure}

In the following, we survey our type-and-effect system by means of example~\descend.
A partial type derivation is given in Figure~\ref{fig:8}.
For brevity, type judgements and the type rules are presented in a simplified form.
In particular, we restrict our attention to tree types, denoted as $\TreeShort$.
This omission is inessential to the actual complexity analysis.
For the full set of rules see \aword{}.
We now discuss this type derivation step by step.

Let $e$ denote the body of the function definition of \descend, cf.~ Figure~\ref{fig:7}.
Our automated analysis infers an \emph{annotated type} by verifying that the type judgement
$\tjudge{\typed{t}{\TreeShort}}{Q}{e}{\TreeShort}{Q'}$ is derivable.
Types are decorated with \emph{annotations} $Q \defsym [q_\ast,q_{(1,0)},q_{(0,2)}]$ and~$Q' \defsym [q'_\ast,q'_{(1,0)},q'_{(0,2)}]$---employed to express the potential carried by the arguments to \descend\ and its results.
Annotations fix the coefficients of the resource functions in the corresponding potential functions, e.g.,
\begin{inparaenum}[(i)]
  \item $\potential{\typed{t}{\Tree}}{Q} \defsym q_\ast \cdot \rk(t) + q_{(1,0)} \cdot p_{(1,0)}(t) + q_{(0,2)}\cdot p_{(0,2)}(t)$ and
  \item $\potential{\typed{e}{\Tree}}{Q'} \defsym q'_\ast \cdot \rk(e) + q'_{(1,0)} \cdot p_{(1,0)}(e) + q'_{(0,2)} \cdot p_{(0,2)}(e)$.
\end{inparaenum}

By our soundness theorems (see Section~\ref{Soundness}), such a typing guarantees that the \emph{expected} amortised cost of \descend\ is bounded by the expectation (\wrt\ the distribution of values in the limit) of the difference between $\potential{\typed{t}{\Tree}}{Q}$ and $\potential{\typed{\text{\descend}\ t}{\TreeShort}}{Q'}$.
% match
Because $e$ is a \lstinline|match| expression,
the following rule is applied (we only state a restricted rule here, the general rule can be found in \aword{}):
\begin{tcolorbox}[ams equation*]
  \infer[\rulematch]{
    \tjudge{\typed{t}{\TreeShort}}{Q}{\match\ t\
        \with \text{\lstinline{|}} \leaf\ \arrow\ \leaf\
        \text{\lstinline{|}} \flsttree{l}{a}{r}\ \arrow\ e_1}{\TreeShort}{Q'}
    }{%
      \tjudge{\varepsilon}{\varnothing}{\leaf}{\TreeShort}{Q'}
      &
      \tjudge{\typed{l}{\TreeShort},\typed{r}{\TreeShort}}{Q_1}{e_1}{\TreeShort}{Q'}
    }
\end{tcolorbox}
Here $e_1$ denotes the subexpression of $e$ that corresponds to the \lstinline|node| case of \lstinline|match|.
Apart from the annotations $Q$, $Q_1$ and $Q'$, the rule $\rulematch$ constitutes a standard type rule for pattern matching.
With regard to the annotations $Q$ and $Q_1$, \rulematch\ ensures the correct distribution of potential by inducing the constraints
\begin{align*}
q^1_1  = q^1_2 = q_\ast && q^1_{(1,1,0)}  = q_{(1,0)} && q^1_{(1,0,0)} = q^1_{(0,1,0)} = q_\ast && q^1_{(0,0,2)} = q_{(0,2)}
\tkom
\end{align*}
% \begin{inparaenum}[(i)]
%   \item $q^1_1  = q^1_2 = q_\ast$,
%   \item $q^1_{(1,1,0)}  = q_{(1,0)}$,
%   \item $q^1_{(1,0,0)} = q^1_{(0,1,0)} = q_\ast$,
%   \item $q^1_{(0,0,2)} = q_{(0,2)}$,
% \end{inparaenum}
%
where the constraints are immediately justified by recalling the definitions of the resource functions $p_{(\seq{a},b)}(\seq{t}) \defsym \log(\seqx{a_\i \cdot \size{t_\i}}[n][+] + b)$ and $\rk(t) = \rk(l) + \log(\size{l}) + \log(\size{r}) + \rk(r)$.

\begin{figure}[t]
  \centering
  \begin{equation*}
    \infer[\rulematch]{\tjudge{\typed{t}{\TreeShort}}{Q}{\match\ t\
        \with \text{\lstinline{|}} \leaf\ \arrow\ \leaf\
        \text{\lstinline{|}} \flsttree{l}{a}{r}\ \arrow\ e_1}{\TreeShort}{Q'}}{%
	\infer[\rulew]{\tjudge{\typed{l}{\TreeShort}, \typed{r}{\TreeShort}}{Q_1}{\cif\ \coin\ \cthen\ e_2\ \celse\ e_3}{\TreeShort}{Q'}}{%
		\infer[\rulecoin]{\tjudge{\typed{l}{\TreeShort}, \typed{r}{\TreeShort}}{Q_2}{\cif\ \coin\ \cthen\ e_2\ \celse\ e_3}{\TreeShort}{Q'}}{%
            \infer[\rulelet]{\tjudge{\typed{l}{\TreeShort}, \typed{r}{\TreeShort}}{Q_3}{\vlet\ x_l = \tick{(\text{\lstinline{descend l}})} \vin\ \flsttree{x_l}{a}{r}}{\TreeShort}{Q'}}{%
              \infer[\ruletick]{\tjudge{\typed{l}{\TreeShort}}{Q_4}{\tick{(\text{\lstinline{descend l}})}}{\TreeShort}{Q_6}}{%
                  \infer[\ruleapp]{\tjudge{\typed{l}{\TreeShort}}{Q_5}{\text{\lstinline{descend l}}}{\TreeShort}{Q_6}}{%
                    \text{\lstinline{descend:}}{ \atypdcl{\TreeShort}{Q}{\TreeShort}{Q'}}
                }
              }
            &
            \tjudge{\typed{x_l}{\TreeShort},\typed{r}{\TreeShort}}{Q_7}{\flsttree{x_l}{a}{r}}{\TreeShort}{Q'}
          }
          % &
          % \tjudge{\typed{l}{\TreeShort}, \typed{r}{\TreeShort}}{Q_4}{e_3}{\TreeShort}{Q'}
          }
        }
      }
    \end{equation*}
    \vspace{-0.6cm}
  \caption{Partial Type Derivation for Function \descend}
  \label{fig:8}
  \vspace{-0.5cm}
\end{figure}

% weakening
The next rule is a structural rule, representing a \emph{weakening} step that rewrites the annotations of the variable context.
The rule \rulew\ allows a suitable adaptation of the coefficients based on the following inequality, which holds for any substitution $\sigma$ of variables by values,
$\spotential{\typed{l}{\TreeShort}, \typed{r}{\TreeShort}}{Q_1} \geqslant
\spotential{\typed{l}{\TreeShort}, \typed{r}{\TreeShort}}{Q_2}$.
\begin{tcolorbox}[ams equation*]
\infer[\rulew]{%
  \tjudge{\typed{l}{\TreeShort}, \typed{r}{\TreeShort}}{Q_1}{e_1}{\TreeShort}{Q'}
}{%
  \tjudge{\typed{l}{\TreeShort}, \typed{r}{\TreeShort}}{Q_2}{e_1}{\TreeShort}{Q'}
  % &
  % \potential{\typed{l}{\TreeShort}, \typed{r}{\TreeShort}}{Q_1} \geqslant
  % \potential{\typed{l}{\TreeShort}, \typed{r}{\TreeShort}}{Q_2}
}
\end{tcolorbox}
In our prototype implementation this comparison is performed \emph{symbolically}.
We use Farkas' Lemma in conjunction with two facts about the logarithm to linearise this symbolic comparison, namely the monotonicity of the logarithm and the fact that $2 + \log(x) + \log(y) \leqslant 2\log(x+y)$ for all $x,y \geqslant 1$.
For example, Farkas' Lemma in conjunction with the latter fact gives rise to% the constraints
\begin{align*}
  q^1_{(0,0,2)} + 2f & \geqslant q^2_{(0,0,2)} & q^1_{(1,1,0)} - 2f &\geqslant q^2_{(1,1,0)}
  \\
  q^1_{(1,0,0)} + \phantom{2}f & \geqslant q^2_{(1,0,0)} & q^1_{(0,1,0)} + \phantom{2}f & \geqslant q^2_{(0,1,0)}                                                                    \tkom
\end{align*}
for some fresh rational coefficient $f \geqslant 0$ introduced by Farkas' Lemma.
After having generated the constraint system for \descend, the solver is free to instantiate $f$ as needed.
In fact in order to discover the bound $\log(\size{t})$ for \descend, the solver will need to instantiate $f = \sfrac{1}{2}$, corresponding to the inequality $\log(\size{l}+\size{r}) \geqslant \sfrac{1}{2} \log(\size{l}) + \sfrac{1}{2} \log(\size{r}) + 1$.

\begin{figure}[t]
\centering
\begin{code}
meld h1 h2 = match h1 with
  | leaf             -> h2
  | node h1l h1x h1r -> match h2 with
    | node h2l h2x h2r -> if h1x > h2x
      then if coin
        then (node $\tick{(\flst{\scriptsize meld h2l (node h1l h1x h1r)})}$ h2x h2r)
        else (node h2l h2x $\tick{(\flst{\scriptsize meld h2r (node h1l h1x h1r)})}$)
      else (* Omitted for brevity, symmetric to the the depicted case. *)
\end{code}
\vspace{-0.3cm}
\caption{Partial \meld\ function of Randomised Meldable Heaps}
\label{fig:meld}
\vspace{-0.5cm}
\end{figure}

% itecoin
So far, the rules did not refer to sampling and are unchanged from their (non-probabilistic) counterpart
introduced in~\cite{hofmann2021typebased,LMZ:2021}.
The next rule, however, formalises a coin toss, biased with probability~$p$.
Our general rule $\rulecoin$ is depicted in Figure~\ref{fig:6} and is inspired by a similar rule for coin tosses that has been recently been proposed in the literature, cf.~\cite{WangKH20}.
This rule specialises as follows to our introductory example:
\begin{tcolorbox}[ams equation*]
  \infer[\rulecoin]{%
    \tjudge{\typed{l}{\TreeShort}, \typed{r}{\TreeShort}}{Q_2}{\cif\ \coin[1][2]\ \cthen\ e_2\ \celse\ e_3}{\TreeShort}{Q'}}{%
    \begin{minipage}{60ex}
      $\tjudge{\typed{l}{\TreeShort}, \typed{r}{\TreeShort}}{Q_4}{e_3}{\TreeShort}{Q'}$
      \\[1ex]
      $\tjudge{\typed{l}{\TreeShort}, \typed{r}{\TreeShort}}{Q_3}{\vlet\ x_l = \tick{(\text{\descend}\ l)} \vin\ \flsttree{x_l}{a}{r}}{\TreeShort}{Q'}$
    \end{minipage}
    }
\end{tcolorbox}
Here $e_2$ and $e_3$ respectively, denote the subexpressions of the conditional and
in addition the crucial condition $Q_2 = \sfrac{1}{2} \cdot Q_3 + \sfrac{1}{2} \cdot Q_4$ holds.
This condition, expressing that the
corresponding annotations are subject to the probability of the coin toss, gives rise to the
following constraints (among others)
%
% \begin{align*}
%   q^2_{(0,0,2)} & = \sfrac{1}{2} \cdot q^3_{(0,0,2)} + \sfrac{1}{2} \cdot q^4_{(0,0,2)}
%   \\
%   q^2_{(1,0,0)} & = \sfrac{1}{2} \cdot q^3_{(1,0,0)} + \sfrac{1}{2} \cdot q^4_{(1,0,0)}
%   \\
%   q^2_{(0,1,0)} & = \sfrac{1}{2} \cdot q^3_{(0,1,0)} + \sfrac{1}{2} \cdot q^4_{(0,1,0)}
%                   \tpkt
% \end{align*}
\begin{align*}
  q^2_{(0,0,2)} & = \sfrac{1}{2} \cdot q^3_{(0,0,2)} + \sfrac{1}{2} \cdot q^4_{(0,0,2)}
                  & q^2_{(0,1,0)} & = \sfrac{1}{2} \cdot q^3_{(0,1,0)} + \sfrac{1}{2} \cdot q^4_{(0,1,0)}
  \\
  q^2_{(1,0,0)} & = \sfrac{1}{2} \cdot q^3_{(1,0,0)} + \sfrac{1}{2} \cdot q^4_{(1,0,0)}
                  \tpkt
\end{align*}
In the following, we will only consider one alternative of the coin toss and proceed as in the partial type derivation depicted in Figure~\ref{fig:7} (\ie\ we state the \lstinline|then|-branch and omit the symmetric \lstinline|else|-branch).
Thus next, we apply the rule for the \lstinline|let|~expression.
This rule is the most involved typing rule in the system proposed in~\cite{hofmann2021typebased,LMZ:2021}.
However, for our leading example it suffices to consider the following simplified variant:
\begin{tcolorbox}[ams equation*]
  \infer[\rulelet]{\tjudge{\typed{l}{\TreeShort}, \typed{r}{\TreeShort}}{Q_3}{\vlet\ x_l = \tick{(\text{\descend}~l)} \vin\ \flsttree{x_l}{a}{r}}{\TreeShort}{Q'}}{%
    \tjudge{\typed{l}{\TreeShort}}{Q_4}{\tick{(\text{\descend}~l)}}{\TreeShort}{Q_6}
    &
    \tjudge{\typed{l}{\TreeShort}}{Q_7}{\flsttree{x_l}{a}{r}}{\TreeShort}{Q'}
  }
\end{tcolorbox}

\begin{figure}[t]
\centering
\begin{code}
splay a t = match t with
  | node cl c cr -> match cl with
    | node bl b br -> match $\tick[1][2]{(\flst{\scriptsize splay a bl})}$ with (* Recursive call costs $\color{darkgray}\sfrac{1}{2}$. *)
      | node al a1 ar -> if coin
	then $\tick[1][2]{(\flst{\scriptsize node al a1 (node ar b (node br c cr))})}$ (* Rotation costs $\color{darkgray}\sfrac{1}{2}$. *)
	else       node (node (node al a1 ar) b br) c cr (* No rotation. *)
\end{code}
\vspace{-0.3cm}
\caption{Partial \splay\ function of Randomised Splay Trees (zigzig-case)}
\label{fig:splay}
\vspace{-0.5cm}
\end{figure}

Focusing on the annotations, the rule \rulelettreecf\ suitably distributes potential assigned to the variable context, embodied in the annotation $Q_3$, to the recursive call within the \lstinline|let| expression (via annotation $Q_4$) and the construction of the resulting tree (via annotation $Q_7$).
The distribution of potential is facilitated by generating constraints that can roughly be stated as two ``equalities'', that is, (i) ``$Q_3 = Q_4 + D$'' and (ii) ``$Q_7 = D + Q_6$''.
Equality (i) states that the input potential is split into some potential $Q_4$ used for typing $\tick{(\text{\descend}~l)}$ and some remainder potential $D$ (which however is not constructed explicitly and only serves as a placeholder for potential that will be passed on).
Equality (ii) states that the potential $Q_7$ used for typing $\flsttree{x_l}{a}{r}$ equals the remainder potential $D$ plus the leftover potential $Q_6$ from the typing of $\tick{(\text{\descend}~l)}$.
The \ruletick\ rule then ensures that costs are properly accounted for by generating constraints for $Q_4 = Q_5 + 1$.
Finally, the type derivation ends by the application rule, denoted as \ruleapp, that verifies that the recursive call is well-typed \wrt\ the (annotated) signature of the function $\text{\descend} \colon \atypdcl{\TreeShort}{Q}{\TreeShort}{Q'}$, \ie\ the rule enforces that $Q_5 = Q$ and $Q_6 = Q'$.
We illustrate (a subset of) the constraints induced by \rulelet, \ruletick\ and \ruleapp:
\begin{align*}
  q^3_{(1,0,0)} & = q^4_{(1,0)} & q^3_{(0,1,0)} & = q^7_{(0,1,0)} & q'_1 & = q^6_1 & q^4_{(0,2)} & = q^5_{(0,2)} + 1
  \\
  q^3_{(0,0,2)} & = q^4_{(0,2)} & q^3_2 & = q^7_2 & q'_{(1,0)} & = q^6_{(1,0)} & q^4_{(1,0)} & = q^5_{(1,0)}
  \\
  q^3_1 & = q^4_1 & q'_{(0,2)} & = q^6_{(0,2)} & q^6_1 & = q^7_1 & q^5_{(1,0)} & = q_{(1,0)}
  \tkom
\end{align*}
where
\begin{inparaenum}[(i)]
\item the constraints in the first three columns---involving the annotations $Q_3$, $Q_4$, $Q_6$ and $Q_7$---%
stem from the constraints of the rule \rulelettreecf;
\item the constraints in the last column---involving $Q_4$, $Q_5$, $Q$ and $Q'$---%
stem from the constraints of the rule \ruletick\ and \ruleapp.
\end{inparaenum}
For example, $q^3_{(1,0,0)} = q^4_{(1,0)}$ and $q^3_{(0,1,0)}  = q^7_{(0,1,0)}$ distributes the part of the logarithmic potential represented by $Q_3$ to $Q_4$ and $Q_7$;
$q^6_1 = q^7_1$ expresses that the rank of the result of evaluating the recursive call can be employed in the construction of the resulting tree $\tree{x_l}{a}{r}$;
$q^4_{(1,0)} = q^5_{(1,0)}$ and $q^4_{(0,2)}  = q^5_{(0,2)} + 1$ relate the logarithmic resp. constant potential according to the tick rule, where the addition of one accounts for the cost embodied by the tick rule;
$q^5_{(1,0)} = q_{(1,0)}$ stipulates that the potential at the recursive call site must match the function type.

Our prototype implementation \atlas\ collects all these constraints and solves them fully automatically.
Following~\cite{hofmann2021typebased,LMZ:2021}, our implementation in fact searches for a solution that minimises the resulting complexity bound.
For the \descend\ function, our implementation finds a solution that sets $q_{(1,0)}$ to $1$, and all other coefficients to zero.
Thus, the logarithmic bound $\log(\size{t})$ follows.

\subsection{Overview of Benchmarks and Results}

\paragraph{Randomised Meldable Heaps.} Gambin et al.~\cite{GambinM98} proposed meldable heaps as a simple priority-queue data structure that is guaranteed to have expected logarithmic cost for all operations.
All operations can be implemented in terms of the \meld\ function, which takes two heaps and returns a single heap as a result.
The partial source code of \meld\ is given in Figure~\ref{fig:meld} (the full source code of all examples can be found in \aword{}).
Our tool \atlas\ fully-automatically infers the bound $\log(\size{h1}) + \log(\size{h2})$ on the expected cost of \meld.

\begin{figure}[t]
\centering
\begin{code}
insert d t = match t with
  | leaf       -> node leaf d leaf
  | node l a r -> if coin 1/2     (* Assuming probability $\sfrac{1}{2}$ for $a < d$. *)
    then node $\tick{(\flst{\scriptsize insert d l})}$ a r
    else node l a $\tick{(\flst{\scriptsize insert d r})}$
\end{code}
\vspace{-0.2cm}
\caption{\bstinsert\ function of a Binary Search Tree with randomized comparison}
\label{fig:bst-insert}
\vspace{-0.5cm}
\end{figure}

\paragraph{Randomised Splay Trees. } Albers et al. in~\cite{AlbersK02} proposed these splay trees as a variation of deterministic splay trees~\cite{ST:1985}, which have better expected runtime complexity (the same computational complexity in the O-notation but with smaller constants). Related results have been obtained by Fürer~\cite{Furer99}.
The proposal is based on the observation that it is not necessary to rotate the tree in every (recursive) splaying operation but that it suffices to perform rotations with some fixed positive probability in order to reap the asymptotic benefits of self-adjusting search trees.
The theoretical analysis of randomised splay trees~\cite{AlbersK02} starts by refining the cost model of~\cite{ST:1985}, which simply counts the number of rotations, into one that accounts for recursive calls with a cost of $c$ and for rotations with a cost of $d$.
We present a snippet of a functional implementation of randomised splay trees in Figure~\ref{fig:splay}.
We note that in this code snippet we have set $c = d = \sfrac{1}{2}$; this choice is arbitrary; we have chosen these costs in order to be able to compare the resulting amortised costs to the deterministic setting of~\cite{LMZ:2021}, where the combined cost of the recursive call and rotation is set to $1$; we note that our analysis requires fixed costs $c$ and $d$ but these constants can be chosen by the user; for example one can set $c=1$ and $d=2.75$ corresponding to the costs observed during the experiments in~\cite{AlbersK02}.
Likewise the probability of the coin toss has been arbitrarily set to $p = \sfrac{1}{2}$ but could be set differently by the user.
(We remark that to the best of our knowledge no theoretical analysis has been conducted on how to chose the best value of p for given costs $c$ and $d$.)
Our analysis is able to fully automatically infer an amortised complexity bound of $\sfrac{9}{8}\log(\size{t})$ for \splay\ (with $c$, $d$ and $p$ fixed as above), which improves on the complexity bound of $\sfrac{3}{2}\log(\size{t})$ for the deterministic version of \splay\ as reported in~\cite{LMZ:2021}, confirming that randomisation indeed improves the expected runtime.

We remark on how the amortised complexity bound of $\sfrac{9}{8}\log(\size{t})$ for \splay\ is computed by our analysis.
Our tool \atlas\ computes an annotated type for \splay\ that corresponds to the inequality
% \begin{gather*}
%   \sfrac{3}{4} \rk(t) + \sfrac{9}{8} \log(\size{t}) + \sfrac{3}{4}  \geqslant c_{\text{\splay}}(t) + \sfrac{3}{4} \rk(\text{\splay}\ t) + \sfrac{3}{4} \tpkt
% \end{gather*}
$\sfrac{3}{4} \rk(t) + \sfrac{9}{8} \log(\size{t}) + \sfrac{3}{4}  \geqslant c_{\text{\splay}}(t) + \sfrac{3}{4} \rk(\text{\splay}\ t) + \sfrac{3}{4}$.
By setting $\phi(t) := \rk(t) + \sfrac{3}{4}$ as potential function in the sense of Tarjan and Sleator~\cite{ST:1985,Tarjan:1985}, the above inequality allows us to directly read out an upper bound on the amortised complexity $a_{\text{\splay}}(t)$ of \splay\ (we recall that the amortised complexity in the sense of Tarjan and Sleator is defined as the sum of the actual costs plus the output potential minus the input potential):
% \begin{gather*}
%    a_{\text{\splay}}(t) =  c_{\text{\splay}}(t) + \phi(\text{\splay}\ t) - \phi(t) \leqslant \sfrac{9}{8} \cdot \log(\size{t}) \tpkt
%  \end{gather*}
$a_{\text{\splay}}(t) =  c_{\text{\splay}}(t) + \phi(\text{\splay}\ t) - \phi(t) \leqslant \sfrac{9}{8} \cdot \log(\size{t})$.

\paragraph{Probabilistic Analysis of Binary Search Trees.}
We present a probabilistic analysis of a deterministic binary search tree, which offers the usual \bstcontains, \bstinsert, and \bstdelete\ operations, where \bstdelete\ uses \bstdelmax\, given in Figure~\ref{fig:bst-delmax}, as a subroutine (the source code of the missing operations is given in \aword{}).
We assume that the elements inserted, deleted and searched for are equally distributed;
hence, we conduct a probabilistic analysis by replacing every comparison with a coin toss of probability one half.
We will refer to the resulting data structure as Coin Search Tree in our benchmarks.
The source code of \bstinsert\ is given in Figure~\ref{fig:bst-insert}.
Our tool \atlas\ infers an logarithmic expected amortised cost for all operations, \ie, for \bstinsert\ and \bstdelmax\ we obtain
\begin{inparaenum}[(i)]
\item $\sfrac{3}{2} \rk(t) + \sfrac{1}{2} \log(\size{t})  \geqslant c_{\text{\bstinsert}}(t) + \sfrac{3}{2} \rk(\text{\bstinsert}\ t)$; and
\item $\sfrac{3}{2} \rk(t) + \sfrac{1}{2} \log(\size{t})  \geqslant c_{\text{\bstdelmax}}(t) + \sfrac{3}{2} \rk(\text{\bstdelmax}\ t)$,
\end{inparaenum}
from which we obtain an expected amortised cost of $\sfrac{1}{2} \log(\size{t})$ for both functions.

\begin{figure}[t]
\centering
\begin{code}
(*pre-condition: t is not a leaf*)
delete_max t = match t with
  | node l b r -> match r with
    | leaf         -> (l,b)
    | node rl c rr -> match rr with
      | leaf -> ((node l b rl),c)
      | rr   -> let (t',max) = $\tick{(\flst{\scriptsize delete\_max rr})}$ in match t' with
        | node rrl1 x xa -> (node (node (node l b rl) c rrl1) x xa,max)
\end{code}
\vspace{-0.2cm}
\caption{\bstdelmax\ function of a Coin Search Tree with one rotation}
\label{fig:bst-delmax}
\vspace{-0.5cm}
\end{figure}

\section{Probabilistic Functional Language}
\label{sec:language}
\paragraph*{Preliminaries.}

Let $\Rplus$ denote the non-negative reals and $\Rplusinfty$ their extension by $\infty$.
We are only concerned with \emph{discrete distributions} and drop ``discrete'' in the following.
Let $A$ be a countable set and let $\Dist{A}$ denote the set of \emph{(sub)distributions} $d$ over $A$, whose
support $\supp{\mu} \defsym \{a \in A \mid \mu(a) \not= 0\}$ is countable. Distributions are denoted by Greek letters.
For $\mu \in \Dist{A}$, we may write $\mu = \{a^{p_i}_i\}_{i \in I}$, assigning probabilities $p_i$ to $a_i \in A$ for every $i \in I$, where $I$ is a suitable chosen index set. We set $\prob{\mu} \defsym \sum_{i \in I} p_i$. If the support is finite, we simply write $\mu = \{a^{p_1}_1,\dots,a^{p_n}_n\}$
The \emph{expected value} of a function $f \colon A \to \Rplus$ on $\mu \in \Dist{A}$ is defined as
$\Expect{\mu}{f} \defsym \sum_{a \in \supp{\mu}} \mu(a) \cdot f(a)$. Further, we denote by
$\sum_{i \in I} p_i \cdot \mu_i$ the \emph{convex combination of distributions $\mu_i$}, where $\sum_{i\in I} p_i \leqslant 1$.
As by assumption $\sum_{i\in I} p_i \leqslant 1$, $\sum_{i \in I} p_i \cdot \mu_i$ is always a (sub-)distribution.

\begin{figure}[t]
\centering
\begin{align*}
  \circ & \Coloneqq \textup{\lstinline{<}}
	\mid \textup{\lstinline{>}}
	\mid \textup{\lstinline{=}}
  \\
  % NOTE: The idea regarding the alignment on the following four lines is as follows:
  % To the left, we show how to "create" something:
  % Line 2, left: true, false, comparison all yield something of type Bool.
  % Line 2, right: To make something useful with a boolean, use an ite.
  % Line 3, right: The ite with superpowers.
  % Line 4, left: To create a tree use leaf or the node constructor.
  % Line 4, right: To make something useful with a tree use a match.
  % Line 5, left: To introduce a variable, use let.
  % Line 5, right: To use a variable, just mention it.
  % Line 6, left: To create a pair, use parentheses and comma.
  % Line 6, right: To deconstruct a pair, use match.
  % Line 1: Doesn't really fit this split, but contains function application
  %         and ticks which are somewhat related through our old cost model.
  e & \Coloneqq f~x_1~\dots~x_n
   && \mid \tick[$\color{black}a$][$\color{black}b$]{e}
  \\
    & \mid \false \mid \true \mid e_1 \circ e_2
    && \mid \cif\ x\ \cthen\ e_1\ \celse\ e_2
  \\
    &
    && \mid \cif\ \nondet \ \cthen\ e_1\ \celse\ e_2
  \\
    &
    && \mid \cif\ \coin[$\color{black}a$][$\color{black}b$] \ \cthen\ e_1\ \celse\ e_2
  \\
    & \mid \leaf \mid \textup{\tree{x_1}{x_2}{x_3}}
    && \mid \match\ x\ \with\
      \textup{\lstinline{|}}\ \leaf \arrow e_1\
      \textup{\lstinline{|}}\ \textup{\tree{x_1}{x_2}{x_3}} \arrow e_2
  \\
    & \mid \textup{\pair{x_1}{x_2}}
    && \mid \match\ x\ \with\
      \textup{\lstinline{|}}\ \textup{\pair{x_1}{x_2}} \arrow e\
  \\
    & \mid \vlet\ x~\equal~e_1\ \vin\ e_2
    && \mid x
  \\
\end{align*}
\vspace{-1cm}
\caption{A Core Probabilistic (First-Order) Programming Language}
\label{fig:1}
\vspace{-0.5cm}
\end{figure}

\paragraph*{Syntax.}
In Figure~\ref{fig:1}, we detail the syntax of our core probabilistic (first-order) programming language.
With the exception of ticks, expressions are given in \lnf\ to simplify the presentation
of the operational semantics and the typing rules.
In order to ease the readability, we make use of mild
syntactic sugaring in the presentation of actual code (as we already did above).

To make the presentation more succinct, we assume only the following types: a set of
\emph{base types} $\mathcal{B}$ such as Booleans $\Bool = \{\true, \false\}$, integers $\Int$,
or rationals $\Rat$, product types, and binary trees $\Tree$,
whose internal nodes are labelled with elements $\typed{b}{\Base}$, where $\Base$ denotes
an arbitrary base type. \emph{Values} are either of base types, trees or pairs of values.
We use lower-case Greek letters (from the beginning of the alphabet) for the denotation of types.
Elements $\typed{t}{\Tree}$ are defined by the following grammar which fixes notation.
%
% \begin{equation*}
%   t \Coloneqq \leaf \mid \tree{t_1}{b}{t_2} \tpkt
% \end{equation*}
$t \Coloneqq \leaf \mid \tree{t_1}{b}{t_2}$.
The size of a tree is the number of leaves: $\size{\leaf} \defsym 1$,
$\size{\tree{t}{a}{u}} \defsym \size{t} + \size{u}$.

We skip the standard definition of integer constants $n \in \Z$ as well as variable
declarations, cf.~\cite{Pierce:2002}. Furthermore, we omit binary operators with the exception of
essential comparisons.
%For our analysis, these are unimportant, as long as we assume that no actual costs are emitted.
As mentioned, to represent sampling we make use of a dedicated \lstinline|if|-\lstinline|then|-\lstinline|else| expression, whose guard evaluates to $\true$ depending on a coin toss with fixed probability. Further, non-deterministic
choice is similarly rendered via an \lstinline|if|-\lstinline|then|-\lstinline|else| expression.
%
% GM: space
Moreover, we make use of \emph{ticking}, denoted by an operator $\tick[\ensuremath{\color{black} a}][\ensuremath{\color{black} b}]{\cdot}$ to annotate costs, where $a$, $b$ are optional and default to one.
Following Avanzini et al.~\cite{ABD:2021}, we represent ticking $\tick{\cdot}$ as an operation, rather than in~\lnf,
as in~\cite{WangKH20}. This allows us to suit a big-step semantics that only accumulates the cost of terminating expressions. The set of all expressions is denoted $\Expr$.

A \emph{typing context} is a mapping from variables $\VS$ to types.
Type contexts are denoted by upper-case Greek letters, and the empty context is denoted $\varepsilon$.
A program $\Program$ consists of a signature $\FS$ together with a set of function definitions of the form
$f~x_1~\dots~x_n = e_f$, where the $x_i$ are variables and $e_f$ an expression.
When considering some expression $e$ that includes function calls we will always assume that these function calls are defined by some program $\Program$.
A \emph{substitution} or (\emph{environment}) $\sigma$ is a mapping from variables to values that respects
types. Substitutions are denoted as sets of assignments: $\sigma = \{\seqx{x_\i\mapsto t_\i}\}$.
We write $\dom(\sigma)$ %($\range(\sigma)$)
to denote the domain %(range)
of $\sigma$.

\begin{figure}[t]
\begin{center}
  \begin{tabular}{l@{\qquad}l}
    $\flstk{let }x\flst{ = }w \flstk{ in }e_2 \red e_2[x \mapsto w]$
    & $\flstk{if }\true \flstk{ then }e_1\flstk{ else }e_2 \red e_1$
    \\[3pt]
    $\flstk{if } \flstp{coin~} a\flst{/}b \flstk{ then }e_1\flstk{ else }e_2 \red \{e_1^{\sfrac{a}{b}},e_2^{1-\sfrac{a}{b}}\}$
    & $\flstk{if }\false \flstk{ then }e_1\flstk{ else }e_2 \red e_2$
    \\[3pt]
    $\flstk{if } \flstp{nondet} \flstk{ then }e_1\flstk{ else }e_2 \red e_1$
    & $\flstk{if } \flstp{nondet} \flstk{ then }e_1\flstk{ else }e_2 \red e_2$
    \\[3pt]
    $\flstk{match}\ \leaf\ \flstk{with}
    \begin{array}[t]{l}%
    \flst{|}\flstc{leaf} \flst{->} e_1 \\
    \flst{|}\text{\flsttree{$x_0$}{$x_1$}{$x_2$}} \flst{->} e_2
    \end{array}
    \red e_1$
    &  $\m{f}~x_1\sigma~\dots~x_k\sigma \red  e\sigma$
    \\[3pt]
    \multicolumn{2}{l}{%
    $\flstk{match}\ \text{\flsttree{$t$}{$a$}{$u$}}\ \flstk{with}
    \begin{array}[t]{l}%
    \flst{|}\flstc{leaf} \flst{->} e_1 \\
    \flst{|}\text{\flsttree{$x_0$}{$x_1$}{$x_2$}} \flst{->} e_2
    \end{array}
    \red e_2$%
    }
    \\[3pt]
    $\flstk{match}\ \text{\flstpair{$t$}{$u$}} \
      \flstk{with}\ \flst{|} \text{\flstpair{$t$}{$u$}} \flst{->} e
      \red e$
    & $\tick[$a$][$b$]{e} \toop{\sfrac{a}{b}} e$
    \\[1pt]
  \end{tabular}
\end{center}

Here we assume $\m{f}~x_1~\dots~x_k \flst{ = } e \in \Program$, $\sigma$ a substitution respecting the signature of $f$ and $w$ is a value.
\vspace{-0.2cm}
\caption{One-Step Reduction Rules}
\label{fig:3}
\vspace{-0.5cm}
\end{figure}

\section{Operational Semantics}
\label{sec:semantics}
% In this section, we define a
% \begin{inparaenum}[(i)]
%   \item a \emph{small-step} semantics (see Figures~\ref{fig:3} and~\ref{fig:2}) and
%   \item a \emph{big-step} semantics (see Figure~\ref{fig:big-step})
% \end{inparaenum}
% %
% for our core functional language. In Theorem~\ref{t:3} we remark on the equivalence of the induced normal forms and costs, and in Section~\ref{Soundness} we comment on the trade-offs between the respective soundness theorem induced by these semantics.

\paragraph*{Small-Step Semantics.}
% GM: space
% The one-step reduction of our language is given in Figure~\ref{fig:3}. For brevity, we have dropped the cost
% annotation $c$, if $c=0$. Then the operational semantics is given as a ternary \emph{probabilistic reduction relation}
% ${\tomulti{}} \subseteq {\MDist{\Expr} \times \Rplusinfty \times \MDist{\Expr}}$ over distribution over expressions and the cost of the evaluations, cf.~Figure~\ref{fig:2}.
%
The small-step semantics is formalised as a (weighted) non-deterministic, probabilistic abstract reduction system~\cite{BG:RTA:05,ADY:2020} over $\MDist{\Expr}$. In this way (expected) cost, non-determinism and probabilistic sampling are taken care of.
Informally, a probabilistic abstract reduction system is a transition systems where reducts are chosen from a probability distribution.
A reduction \wrt\ such a system is then given by a stochastic process~\cite{BG:RTA:05}, or equivalently,
as a reduction relation over \emph{multidistributions}~\cite{ADY:2020}, which arise naturally in the
context of non-determinism (we refer the reader to~\cite{ADY:2020} for an example that illustrates the advantage of multidistributions in the presence of non-determinism).
More precisely, multidistributions are countable \emph{multisets} $\prms{a_i^{p_i} }_{i \in I}$ over pairs $p_i \colon a_i$
of \emph{probabilities} $0 < p_i \leqslant 1$ and \emph{objects} $a_i \in A$ with $\sum_{i \in I} p_i \leqslant 1$. (For ease of presentation, we do not distinguish notationally between sets and multisets.)
Multidistributions over objects $A$ are denoted by $\MDist{A}$. For a multidistribution $\mu \in \MDist{A}$ the induced
distribution $\overline{\mu} \in \Dist{A}$ is defined in the obvious way by summing up the probabilities of equal
objects.

Following~\cite{AMS:2020}, we equip transitions with (positive) weights, amounting to the cost of the transition.
Formally, a \emph{(weighted) Probabilistic Abstract Reduction System} (PARS) on  a countable set $A$ is a ternary relation
$\cdot \toop{\cdot} \cdot \subseteq A \times \Rplus \times \Dist{A}$. For $a \in A$, a rule $a \toop{c} \{ b^{\mu(b)} \}_{b \in A}$ indicates that $a$ reduces to $b$ with probability $\mu(b)$ and cost $c \in \Rplus$. Note that any right-hand-side of a PARS is supposed to be a \emph{full} distribution, \ie\
the probabilities in $\mu$ sum up to $1$. Given two objects $a$ and $b$,
$a \toop{c} \{ b^1 \}$ will be written $a \toop{c} b$ for brevity. An object $a \in A$ is called \emph{terminal} if there is no rule $a \toop{c} \mu$,
denoted $a \not\toop{}$.

We suit the one-step reduction relation $\red$ given in Figure~\ref{fig:3} as a (non-deterministic) PARS over multidistributions. As above, we sometimes identify Dirac distributions $\{e^1\}$ with $e$. \emph{Evaluation contexts} are formed by \lstinline|let| expressions, as in the following grammar:
$\context \Coloneqq \hole \mid \flstk{let }x\flst{ = } \context\ \flstk{ in }e$.
We denote with $\context[e]$ the result of substitution the empty context $\hole$ with expression~$e$.
Contexts are exploited to lift the one-step reduction to a ternary weighted reduction relation ${\tomulti{\cdot}} \subseteq {\MDist{\Expr} \times \Rplusinfty \times \MDist{\Expr}}$, cf.~Figure~\ref{fig:2}. (In (Conv), $\biguplus$ refers to the usual notion of multiset union.)

The relation $\tomulti{\cdot}$ constitutes the operational (small-step) semantics of our simple probabilistic function language. Thus $\mu \tomulti{c} \nu$ states that the submultidistribution of objects $\mu$ evolves to a submultidistribution of reducts $\nu$ in one step, with an expected cost of $c$.
Note that since $\toop{\cdot}$ is non-deterministic, so is the reduction relation $\tomulti{\cdot}$.
%We denote by $\mu \tomulti{c}_n \nu$ the $n$-fold ($n \geqslant 0$) composition of $\tomulti{\cdot}$ with expected cost $c$, defined by $\mu \tomulti{c}_n \nu$ if $\mu \tomulti{c_1} \cdots \tomulti{c_n} \nu$ and $c = \sum_{i=1}^n c_i$. In particular, $\mu \tomulti{0}_0 \mu$.
%
%\paragraph*{Normal-Form Distributions and Expected Costs.}
We now define the evaluation of an expression $e \in \Expr$ wrt. to the small-step  relation $\tomulti{\cdot}$:
We set $e \tomulti{c}_\infty \mu$, if there is a (possibly infinite) sequence $\prms{e^1} \tomulti{c_1} \mu_1 \tomulti{c_2} \mu_2 \tomulti{c_3} \dots$ with $c = \sum_{n \geqslant} c_n$ and $\mu = \lim_{n \to \infty} \rst{\overline{\mu_n}}$, where $\rst{\overline{\mu_n}}$ denotes the restriction of the distribution $\overline{\mu_n}$ (induced
by the multidistribution $\mu_n$) to a (sub-)distribution over values.
Note that the $\rst{\overline{\mu_n}}$ form a CPO \wrt\ the pointwise ordering, cf.~\cite{Winskel:93}.
Hence, the fixed point $\mu =\lim_{n \to \infty} \rst{\overline{\mu_n}}$ exists.
We also write $e \tomulti{}_\infty \mu$ in case the cost of the evaluation is not important.

\paragraph*{(Positive) Almost Sure Termination.}
A program $\Program$ is \emph{almost surely terminating} (\emph{AST}) if for any substitution $\sigma$, and any evaluation $e\sigma \tomulti{}_\infty \mu$, we have that $\mu$ forms a full distribution.
For the definition of positive almost sure termination we assume that every statement of $\Program$ is enclosed in an ticking operation with cost one; we note that such a cost models the length of the computation.
We say $\Program$ is \emph{positively almost surely terminating} (\emph{PAST}), if for any substitution $\sigma$, and any evaluation $e\sigma \tomulti{c}_\infty \mu$, we have $c < \infty$.
It is well known that PAST implies AST, cf.~\cite{BG:RTA:05}.

\begin{figure}[t]
  \centering
     \begin{prooftree}
      \hypo{v \not\toop{} \phantom{\tomulti{c }}}
      \infer1[(NF)]{\prms{v^1} \tomulti{0} \prms{v^1}}
    \end{prooftree}
    \quad
    \begin{prooftree}
      \hypo{e \toop{c} \{ e_i^{p_i} \}_{i \in I}}
      \infer1[(Step)]{ \prms{\context[e^1]} \tomulti{c } \prms{\context[e_i]^{p_i}}_{i \in I} }
    \end{prooftree}
    \quad
    \begin{prooftree}
      \hypo{\mu_i \tomulti{c_i} \nu_i}
      \hypo{\sum_i p_i \leqslant 1}
      \infer2[(Conv)]{\biguplus_{i} p_i \cdot \mu_i \tomulti{\sum_i p_i c_i} \biguplus_{i} p_i \cdot \nu_i}
    \end{prooftree}
\vspace{-0.2cm}
\caption{Probabilistic Reduction Rules of Distributions of Expressions}
\label{fig:2}
\vspace{-0.5cm}
\end{figure}

\paragraph*{Big-Step Semantics.}
We now define the aforementioned big-step semantics.
We first define approximate judgments $\evaln{\sigma}{n}{c}{e}{\mu}$, see Figure~\ref{fig:big-step}, which say that in derivation trees with depth up to $n$ the expression $e$ evaluates to a subdistribution $\mu$ over values with cost $c$.
We now consider the cost $c_n$ and subdistribution $\mu_n$ in $\evaln{\sigma}{n}{c_n}{e}{\mu_n}$ for $n \rightarrow \infty$. Note that the subdistributions $\mu_n$ in $\evaln{\sigma}{n}{c_n}{e}{\mu_n}$ form a CPO \wrt\ the pointwise ordering, cf.~\cite{Winskel:93}.
Hence, there exists a fixed point $\mu = \lim_{n \rightarrow \infty} \mu_n$.
Moreover, we set $c = \lim_{n \rightarrow \infty} c_n$ (note that either $c_n$ converges to some real $c \in \Rplusinfty$ or we have $c = \infty$).
We now define the big-step judgments $\eval{\sigma}{c}{e}{\mu}$ by setting $\mu = \lim_{n \rightarrow \infty} \mu_n$ and $c = \lim_{n \rightarrow \infty} c_n$ for $\evaln{\sigma}{n}{c_n}{e}{\mu_n}$. We want to emphasise that the cost $c$ in $\eval{\sigma}{c}{e}{\mu}$ only counts the ticks on terminating computations.

\begin{figure}[t!]
\begin{mathpar}
\input{rules/sem/nonval.tex}
\and
\input{rules/sem/leaf.tex}
\and
\input{rules/sem/node.tex}
\and
\input{rules/sem/var.tex}
\and
\input{rules/sem/pair.tex}
\and
\input{rules/sem/app.tex}
\and
\input{rules/sem/let.tex}
\and
\input{rules/sem/matchleaf.tex}
\and
\input{rules/sem/tickB.tex}
\and
\input{rules/sem/matchnode.tex}
\and
\input{rules/sem/itefalse.tex}
\and
\input{rules/sem/matchpair.tex}
\and
\input{rules/sem/itetrue.tex}
\and
\input{rules/sem/nondetA.tex}
\and
\input{rules/sem/nondetB.tex}
\and
\input{rules/sem/coin.tex}
\end{mathpar}

Here $\sigma[x \mapsto w]$ denotes the update of the environment $\sigma$ such that $\sigma[x \mapsto w](x) = w$ and the value of all other variables remains unchanged.
For function application we set $\sigma' \defsym \{\seqx{y_\i \mapsto x_\i \sigma}[k]\}$.
In the rules covering $\match$ we set $\sigma'' \defsym \sigma \dunion \{x_0 \mapsto t, x_1 \mapsto a, x_2 \mapsto u\}$ and $\sigma''' \defsym \sigma \dunion \{x_0 \mapsto t, x_2 \mapsto u\}$ for trees and tuples respectively.
\vspace{-0.2cm}
\caption{Big-Step Semantics.}
\label{fig:big-step}
\vspace{-0.5cm}
\end{figure}

\begin{theorem}[Equivalence]
  \label{t:3}
  Let $\Program$ be a program and $\sigma$ a substitution.
  Then,
  \begin{inparaenum}[(i)]
  \item $\eval{\sigma}{c}{e}{\mu}$ implies that
  $e\sigma \tomulti{c'}_\infty \mu$ for some $c' \ge c$, and
  \item $e\sigma \tomulti{c}_\infty \mu$ implies that $\eval{\sigma}{c'}{e}{\mu}$ for some $c' \leqslant c$.
  \end{inparaenum}
  Moreover, if $e\sigma$ almost-surely terminates, we can choose $c=c'$ in both cases.
\end{theorem}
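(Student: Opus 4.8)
The plan is to decouple the theorem into a statement about the limiting value distribution, which is identical in both semantics, and a statement about cost, where the two semantics differ precisely on non-terminating computations. Since the paper already observes that both the small-step value approximants $\rst{\overline{\mu_n}}$ and the big-step approximants $\mu_n$ from $\evaln{\sigma}{n}{c_n}{e}{\mu_n}$ form CPOs under the pointwise order, with $\mu$ and $c$ their suprema, it suffices to establish matching claims at the level of finite approximants and then pass to the limit, using that the two families of approximants turn out to be mutually cofinal.

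For direction (i), I would show by induction on the derivation depth $n$ that every approximate judgement $\evaln{\sigma}{n}{c_n}{e}{\mu_n}$ is simulated by a reduction of $\prms{(e\sigma)^1}$ in finitely many steps to some $\nu$ whose value restriction $\rst{\overline{\nu}}$ dominates $\mu_n$ pointwise and whose accumulated cost is at least $c_n$. The interesting case is the $\vlet$ rule: the inductive hypothesis first reduces $e_1\sigma$ inside the evaluation context $\flstk{let }x\flst{ = }\context\flstk{ in }e_2\sigma$, then each terminating branch $w \in \supp{\nu}$ fires $\flstk{let }x\flst{ = }w\flstk{ in }e_2\sigma \red e_2\sigma[x \mapsto w]$, and the continuations are recombined with (Conv) weighted by $\nu(w)$, matching the big-step cost $c_1 + \sum_{w} \nu(w)\,c_w$ and distribution $\sum_w \nu(w)\,\mu_w$. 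The cost gap enters in the tick case: the big-step rule charges $\prob{\mu} \cdot \sfrac{a}{b}$, scaling the tick by the mass $\prob{\mu}$ that has already reached a value, whereas the one-step rule for $\tick{e}$ charges the full $\sfrac{a}{b}$ eagerly. Because a coherent family of approximate derivations fixes one scheduler, the simulating reductions for increasing $n$ are nested prefixes of a single infinite reduction $\tomulti{}_\infty$, so the total cost $c'$ dominates every finite prefix and hence $c' \ge \sup_n c_n = c$.

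Direction (ii) runs the converse simulation: a small-step reduction of length $m$ is absorbed into a big-step derivation of sufficient depth whose value distribution dominates $\rst{\overline{\mu_m}}$ and whose cost is at most that of the length-$m$ prefix, since the big-step derivation charges only those ticks lying on branches that ultimately reach a value---a sub-collection of the ticks charged eagerly by the small-step reduction. Passing to the limit yields $c' \le c$, and combining the two domination statements from (i) and (ii) shows the value approximants are mutually cofinal, so both semantics yield the same $\mu$. Non-determinism is handled uniformly: a big-step derivation fixes a scheduler selecting, at each $\nondet$ node, which of the two rules is used, and the simulating small-step reduction resolves $\nondet$ the same way in every (Step); symmetrically, the branch choices of a given small-step reduction dictate the scheduler of the big-step derivation built in (ii), witnessing the required existential.

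For the moreover-part, assume $e\sigma$ almost-surely terminates, so $\mu$ is full, $\prob{\mu} = 1$. The only discrepancy between the two cost accumulations is the cost incurred on branches that never reach a value---exactly the ticks the big-step semantics omits---and almost-sure termination assigns these branches probability zero, so, with the convention $0 \cdot \infty = 0$, they contribute nothing to the expected cost in either semantics; hence $c = c'$ in both (i) and (ii). The step I expect to be the main obstacle is this cost bookkeeping: reconciling the big-step semantics, which weights each tick by the probability that its subcomputation terminates, with the small-step semantics, which charges per transition and eagerly, requires tracking the two monotone cost approximants closely enough that the inequalities hold at every finite stage and survive the supremum. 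Transporting the scheduler faithfully through the nested $\vlet$ simulation, where branches terminate at different depths, is the other delicate point.
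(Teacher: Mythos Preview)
The paper states Theorem~\ref{t:3} without proof (neither in the main text nor in the appendix; only Theorems~\ref{t:1} and~\ref{t:2} receive proofs there), so there is no argument to compare your proposal against.

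Your sketch is a plausible outline of the standard simulation proof one would expect here: mutual simulation between the finite big-step approximants $\evaln{\sigma}{n}{c_n}{e}{\mu_n}$ and finite prefixes of $\tomulti{}$, followed by passage to the limit, with the cost gap localised in the tick rule (eager $\sfrac{a}{b}$ in small-step versus $\prob{\mu}\cdot\sfrac{a}{b}$ in big-step). The identification of the $\vlet$ case and the scheduler-transport for $\nondet$ as the two places requiring care is accurate. One point you gloss over deserves more attention: your claim that ``the simulating reductions for increasing $n$ are nested prefixes of a single infinite reduction'' presupposes that the family $(\evaln{\sigma}{n}{c_n}{e}{\mu_n})_n$ is coherent in the sense that the depth-$n$ derivation is literally a truncation of the depth-$(n{+}1)$ derivation. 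The paper's definition of $\eval{\sigma}{c}{e}{\mu}$ as a limit over $n$ leaves this implicit, and in the presence of $\nondet$ it is not automatic---you must argue that a single scheduler can be fixed once and for all so that the approximants are nested, or else work with the relational form of the judgement and quantify existentially over schedulers on both sides. This is not hard, but it is exactly the kind of detail that separates a sketch from a proof.
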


% \begin{remark}
%   The provided operational big-step semantics strictly generalises the (big-step) semantics given in~\cite{LMZ:2021}. It suffices to set the use the tick operator around all recursive calls to capture the cost model employed in~\cite{LMZ:2021}.
%   % Ie.\ for any non-probabilistic expression $e$, any substituion $\sigma$ and cost $c$, we have $\eval{\sigma}{c}{e}{v}$ iff $\{e\sigma^1\} \tomulti{c} \{v^1\}$, provided the ticking operator is set in accordance to the cost model of~\cite{LMZ:2021}.
%   Furthermore, while partly motivated by the \emph{distribution-based} big-step semantics introduced by Wang et al.~\cite{WangKH20}  our big-step semantics is technically incomparable and also provides additional expressivity. The different representation of ticking renders our semantics incomparable, compare Theorem~\ref{t:2} and the subsequent comments. On the other hand the incorporation of non-determinism constitutes a strict extension.
% \end{remark}

\section{Type-and-Effect System for Expected Cost Analysis}
\label{sec:typesystem}
\begin{figure}[t]
  \centering
  \begin{mathpar}
    \input{rules/typ/ticking.tex}
    \and
    \input{rules/typ/tickingast.tex}
  \end{mathpar}
  \vspace{-0.6cm}
  \caption{Ticking Operator. Note that $a$, $b$ are not variables but literal numbers.}
  \label{fig:4}
  \vspace{-0.4cm}
\end{figure}

\subsection{Resource Functions}

In Section~\ref{sec:overview}, we introduced a variant of Schoenmakers' potential
function, denoted as $\rk(t)$, and the additional potential
functions $p_{(\seq{a},b)}(\seq{t}) = \log(\seqx{a_\i \cdot \size{t_\i}}[n][+] + b)$, denoting the $\log$ of a linear combination of tree sizes.
We demand $\sum_{i=1}^n a_i + b \geqslant 0$ ($a_i \in \N, b \in \Z$) for well-definedness of the latter; $\log$ denotes the logarithm to the base $2$.
Throughout the paper we stipulate $\log(0) \defsym 0$ in order to avoid case distinctions.
Note that the constant function $1$ is representable: $1 = \lambda t. \log(0 \cdot \size{t}+2) = p_{(0, 2)}$.
We are now ready to state the resource annotation of a sequence of trees.

\begin{definition}%[Resource Annotation]
\label{d:potential}
A \emph{resource annotation} or simply \emph{annotation} of length $m$ is a sequence $Q = [q_1,\dots,q_m] \cup
[(q_{(\seq{a}[m],b)})_{a_i, b \in \N}]$, vanishing almost everywhere.
The length of $Q$ is denoted $|Q|$.
The empty annotation, that is, the annotation where all coefficients are set to zero, is denoted as $\varnothing$.
Let $\seq{t}[m]$ be a sequence of trees.
Then, the potential of $\seq[m]{t}$ \wrt\ $Q$ is given by
\begin{equation*}
\potential{\seq{t}[m]}{Q} \defsym \sum_{i=1}^m q_i \cdot \rk(t_i) + \sum_{\seq{a}[m] \in \N,b \in \Z}
q_{(\seq{a}[m],b)} \cdot p_{(\seq{a}[m],b)}(\seq{t}[m])
  \tpkt
\end{equation*}
\end{definition}

In case of an annotation of length $1$, we sometimes write $q_\ast$ instead of $q_1$.
We may also write $\potential{\typed{v}{\TypeA}}{Q}$ for the potential of a value
of type $\TypeA$ annotated with $Q$. Both notations were already used above.
Note that only values of tree type are assigned a potential.
We use the convention that the sequence elements of resource annotations
are denoted by the lower-case letter of the annotation,
potentially with corresponding sub- or superscripts.

\begin{example}
Let $t$ be a tree. To model its potential as $\log(\size{t})$
in according to Definition \ref{d:potential}, we simply set
$q_{(1,0)} \defsym 1$ and thus obtain $\potential{t}{Q} = \log(\size{t})$,
which describes the potential associated to the input tree $t$ of our leading example
\descend\ above.
\qed
\end{example}

Let $\sigma$ be a substitution, let $\Gamma$ denote a typing context
and let $\seqx{\typed{x_\i}{\TreeShort}}$ denote all tree types in $\Gamma$. A \emph{resource annotation for $\Gamma$} or simply \emph{annotation}
is an annotation for the sequence of trees $\seqx{x_\i\sigma}$.
We define the \emph{potential} of the annotated context $\Gamma {\mid} Q$ \wrt\ a substitution $\sigma$ as
$\spotential{\Gamma}{Q} \defsym \potential{\seqx{x_\i\sigma}}{Q}$.
\begin{definition}
An \emph{annotated signature} $\FS$
maps functions $f$ to sets of pairs of annotated types
for the arguments and the annotated type of the result:
\begin{equation*}
\FS(f) \defsym \bigl\{ \atypdcl{\alphaprod}{Q}{\betaprod}{Q'} \bigm|
  m = |Q|, 1 = |Q'| \bigr\}
\tpkt
\end{equation*}
We suppose $f$ takes $n$ arguments of which $m$ are trees; $m \leqslant n$ by definition.
Similarly, the return type may be the product $\betaprod$. In this case, we demand that at most
one $\beta_i$ is a tree type.%
\footnote{The restriction to at most one tree type in the resulting type is non-essential and could be lifted. However, as our benchmark functions do not require this extension, we have elided it for ease of presentation.}
\end{definition}

Instead of $\atypdcl{\alphaprod}{Q}{\betaprod}{Q'} \in \FS(f)$, we sometimes succinctly
write $\typed{f}{\atypdcl{\alpha}{Q}{\beta}{Q'}}$ where $\alpha$, $\beta$ denote
the product types $\alphaprod$, $\betaprod$, respectively. It is tacitly understood that
the above syntactic restrictions on the length of the annotations $Q$, $Q'$ are fulfilled.
For every function $f$, we also consider its \emph{cost-free} variant from which all ticks have been removed.
We collect the cost-free signatures of all functions in the set $\FScf$.

\begin{example}
  Consider the function \descend\ depicted in Figure~\ref{fig:8}. Its signature is formally represented
  as $\atypdcl{\TreeShort}{Q}{\TreeShort}{Q'}$, where $Q \defsym [q_\ast] \cup [(q_{(a,b)})_{a,b \in \Z}]$ and
  $Q' \defsym [q'_\ast] \cup [(q'_{(a,b)})_{a,b \in \Z}]$. We leave it to the reader to specify the coefficients in $Q$, $Q'$ so that the rule~\ruleapp\ as depicted in Section~\ref{sec:overview} can indeed be employed to type
  the recursive call of \descend.
\end{example}

Let $Q = [q_\ast] \cup [(q_{(a,b)})_{a,b \in \N}]$ be an annotation and let $K$
be a rational such that $q_{(0,2)}+ K \geqslant 0$.
Then, $Q' \defsym Q + K$ is defined as follows:
$Q' = [q_\ast] \cup [(q'_{(a,b)})_{a,b \in \N}]$,
where $q'_{(0,2)} \defsym q_{(0,2)} +K$ and for all $(a,b) \not= (0,2)$
$q_{(a,b)}' \defsym q_{(a,b)}$.
Recall that $q_{(0,2)}$ is the coefficient of function $p_{(0,2)}(t) = \log(0\size{t} + 2) = 1$, so
the annotation $Q+K$ increments or decrements cost from the potential induced by $Q$ by $\abs{K}$,
respectively.
Further, we define the multiplication of an annotation $Q$ by a constant $K$, denoted as $K \cdot Q$ pointwise. Moreover, let $P = [p_\ast] \cup [(p_{(a,b)})_{a,b \in \N}]$ be another annotation. Then
the addition $P+Q$ of annotations $P, Q$ is similarly defined pointwise.

\begin{figure}[t]
  \centering
  \begin{mathpar}
    \input{rules/typ/itecoin.tex}
  \end{mathpar}
  \vspace{-0.6cm}
  \caption{Conditional expression that models tossing a coin.}
  \label{fig:6}
  \vspace{-0.5cm}
\end{figure}

\subsection{Typing Rules}

The non-probabilistic part of the type system is given in \aref{fig:5,fig:5b}.
In contrast to the type system employed in~\cite{hofmann2021typebased,LMZ:2021},
the cost model is not fixed but controlled by the ticking operator.
Hence, the corresponding application rule $\ruleapp$ has been adapted.
Costing of evaluation is now handled by a dedicated \emph{ticking} operator, cf.~Figure~\ref{fig:4}.
In Figure~\ref{fig:6}, we give the rule \rulecoin\ responsible for typing probabilistic conditionals.

\begin{figure}[t]
\centering
\begin{code}
foo t = match t with
  | leaf       -> leaf
  | node l a r -> let l' = $\tick{(\flst{\scriptsize foo l})}$ in let r' = $\tick{(\flst{\scriptsize foo r})}$ in
    if nondet then l' else r'
\end{code}
\vspace{-0.3cm}
\caption{Function \foo\ illustrates the difference between $\ruletick$ and $\ruletickast$.}
\label{fig:foo}
\vspace{-0.5cm}
\end{figure}

We remark that the core type system, that is, the type system given by~\cref{fig:6} together with the remaining
rules~\aref{fig:5,fig:5b}, ignoring annotations, enjoys subject reduction and progress in the following sense,
which is straightforward to verify.

\begin{lemma}
Let $e$ be such that $\typed{e}{\alpha}$ holds. Then:
\begin{inparaenum}[(i)]
\item If $e \toop{c} \{ e^{p_i}_i \}_{i \in I}$, then $\typed{e_i}{\alpha}$ holds for all
  $i \in I$.
\item The expression $e$ is in normal form \wrt\ $\toop{c}$ iff $e$ is a value.
\end{inparaenum}
\end{lemma}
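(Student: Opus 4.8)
The plan is to follow the standard preservation-and-progress recipe, with a substitution lemma as the only non-routine ingredient. First I would prove a \emph{substitution lemma} for the annotation-erased system: if $\Gamma, \typed{x}{\beta} \vdash \typed{e}{\alpha}$ and $v$ is a value with $\typed{v}{\beta}$, then $\Gamma \vdash \typed{e[x \mapsto v]}{\alpha}$. This goes by induction on the typing derivation of $e$, preceded by a routine weakening lemma; the only cases needing attention are the binders (\rulelet, \rulematch, \rulematchpair, \ruleapp), where one renames bound variables away from $x$ and applies the induction hypothesis to the appropriate subcontext. Alongside this I would record the \emph{canonical forms} facts, obtained by inversion on the typing rules restricted to values: a value of type $\Bool$ is $\true$ or $\false$, a value of type $\Tree$ is $\leaf$ or of the form $\tree{t_1}{b}{t_2}$, and a value of a product type is a pair.

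For item~(i) I would do a case analysis on the rule of \cref{fig:3} that justifies $e \toop{c} \{e_i^{p_i}\}_{i \in I}$, using inversion on the typing derivation of $e$ in each case. The non-substituting redexes are immediate: for the $\cif$-rules, for \rulecoin, and for the two $\nondet$-rules, inversion shows that every branch already carries the type $\alpha$, so each reduct does too; the $\leaf$-match case is the same once the scrutinee's shape is fixed by canonical forms. The substituting redexes, namely the $\vlet$-rule $\vlet\ x = w\ \vin\ e_2 \red e_2[x \mapsto w]$, the \rulematch\ and \rulematchpair\ steps (which bind the pattern variables to the components of the tree resp.\ pair scrutinee), and the function-call rule $f\,x_1\sigma \cdots x_k\sigma \red e\sigma$, are each closed by the substitution lemma; the last additionally uses the standing assumption that $\Program$ is well typed, so that the body of $f$ has the declared result type under its parameter context. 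Finally the tick redex $\tick{e} \red e$ is immediate, since with annotations erased both \ruletick\ and \ruletickast\ of \cref{fig:4} derive $\typed{e}{\alpha}$ from $\typed{\tick{e}}{\alpha}$.

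For item~(ii) I would argue the two directions separately. That every value is $\toop{c}$-normal follows by inspecting the left-hand sides of \cref{fig:3}: none of them matches $\leaf$, a fully evaluated $\tree{\cdot}{\cdot}{\cdot}$, a pair of values, or a base constant. For the converse I proceed by case analysis on the shape of a well-typed $e$ that is not a value, using canonical forms to exclude stuck configurations: function applications, tick expressions, comparisons, and the three conditional variants are always redexes; for a $\match$ expression the typing forces the scrutinee to be a tree or pair value, so a \rulematch\ or \rulematchpair\ step fires; and for $\vlet\ x = e_1\ \vin\ e_2$ either $e_1$ is a value and the $\vlet$-rule applies, or by the induction hypothesis $e_1$ reduces and the step propagates through the evaluation context $\vlet\ x = \hole\ \vin\ e_2$ of \cref{fig:2}.

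I expect the substitution lemma to be the only real work; everything else is bookkeeping via inversion and canonical forms, which is why the paper calls the result straightforward. The two points worth stating carefully are the $\vlet$ case of progress, where one must appeal to the evaluation-context closure rather than to a top-level redex, and the function-call case of both parts, which relies on the global well-typedness of $\Program$.
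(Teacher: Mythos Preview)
Your proposal is correct and is exactly the standard preservation-and-progress template the paper has in mind; the paper itself supplies no proof beyond the remark that the lemma is ``straightforward to verify''. The only slip is that in the progress direction you announce a ``case analysis on the shape of $e$'' but then invoke an ``induction hypothesis'' in the $\vlet$ case---this should indeed be a structural induction on $e$ (or on the typing derivation), and with that adjustment your appeal to the evaluation-context closure of \cref{fig:2} is the right way to handle a $\vlet$ whose bound expression is not yet a value.
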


\subsection{Soundness Theorems}
\label{Soundness}

A program $\Program$ is called \emph{well-typed} if for any definition
$f(\seq{x}[n]) = e \in \Program$ and any annotated signature
$f\colon\atypdcl{\seq{\alpha}[n][\times]}{Q}{\beta}{Q'}$, we have
a corresponding typing
$\tjudge{\typed{x_1}{\alpha_1},\dots,\typed{x_k}{\alpha_k}}{Q}{e}{\beta}{Q'}$.
A program $\Program$ is called \emph{cost-free} well-typed, if the
cost-free typing relation
(denoted as $\tjudge{\cdot}{\cdot}{\cdot}{\cdot}{\cdot}$)
is used, which employs the cost-free signatures of all functions.

\begin{restatable}[Soundness Theorem for $\ruletick$]{theorem}{restatethmticknowsound} \label{t:1}
Let $\Program$ be well-typed. Suppose $\tjudge{\Gamma}{Q}{e}{\alpha}{Q'}$ and $e\sigma \tomulti{c}_\infty \mu$.
Then $\spotential{\Gamma}{Q} \geqslant c + \Expect{\mu}{\lambda v. \potential{v}{Q'}}$.
Further, if $\tjudgecf{\Gamma}{Q}{e}{\alpha}{Q'}$, then $\spotential{\Gamma}{Q} \geqslant \Expect{\mu}{\lambda v. \potential{v}{Q'}}$.
\end{restatable}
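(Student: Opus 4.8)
The plan is to prove the statement by induction. The natural induction is on the depth $n$ of the approximate big-step judgments $\evaln{\sigma}{n}{c_n}{e}{\mu_n}$, transferring the result to the small-step semantics via the Equivalence Theorem (\cref{t:3}). Concretely, I would first establish the bound for the big-step semantics: show that whenever $\tjudge{\Gamma}{Q}{e}{\alpha}{Q'}$ and $\evaln{\sigma}{n}{c_n}{e}{\mu_n}$, we have $\spotential{\Gamma}{Q} \geqslant c_n + \Expect{\mu_n}{\lambda v. \potential{v}{Q'}}$. Since $\mu = \lim_{n\to\infty}\mu_n$ and $c = \lim_{n\to\infty} c_n$ exist (both are monotone limits, as noted in the definition of the big-step judgments), taking $n \to \infty$ and using continuity/monotone convergence of the expectation yields $\spotential{\Gamma}{Q} \geqslant c + \Expect{\mu}{\lambda v. \potential{v}{Q'}}$ for the big-step cost $c$. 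Finally, by part~(i) of \cref{t:3}, the small-step cost $c'$ satisfies $c' \geqslant c$; but I must be careful here, since the theorem claims the bound holds with the small-step cost $c$, not the (possibly smaller) big-step cost. I would therefore argue that the \ruletick\ rule precisely mirrors the big-step cost accounting (the big-step semantics only counts ticks on terminating computations, matching the rule $\tjudge{\Gamma}{Q+\sfrac{a}{b}}{\tick[$a$][$b$]{e}}{\alpha}{Q'}$), so the potential bound established against the big-step cost is exactly what is needed.

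The induction itself proceeds by case analysis on the last typing rule applied, matching each typing rule against the corresponding big-step evaluation rule. For the structural value rules (\leaf, \rulenode, variables, pairs) the cost is zero and the potential inequality reduces to the side conditions that the annotations encode, which follow directly from the definitions of $\rk$ and $p_{(\seq{a},b)}$. The interesting cases are \rulelet, \ruletick, \rulecoin, and the nondeterministic conditional. For \ruletick, the key point is that the rule adds $\sfrac{a}{b}$ to the incoming annotation $Q$, and the big-step rule adds $\prob{\mu}\cdot\sfrac{a}{b}$ to the cost; since $\prob{\mu} \leqslant 1$ and the tick is paid up front in the potential, the inequality is preserved. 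For \rulecoin, the condition $Q = p\cdot Q_1 + (1-p)\cdot Q_2$ combined with the big-step rule's cost $pc_1 + (1-p)c_2$ and distribution $p\mu_1 + (1-p)\mu_2$ makes the inequality a convex combination of the two inductive hypotheses, using linearity of expectation.

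The main obstacle will be the \rulelet\ case, which requires combining the two equalities ``$Q_3 = Q_4 + D$'' and ``$Q_7 = D + Q_6$'' that pass intermediate potential $D$ through the evaluation of $e_1$ to the evaluation of $e_2$. The big-step rule for \vlet\ has cost $c_1 + \sum_{w} \nu(w)\cdot c_w$ and distribution $\sum_{w}\nu(w)\cdot\mu_w$, so I would apply the inductive hypothesis once to $e_1$ (obtaining $\spotential{\Gamma}{Q_3} \geqslant c_1 + \Expect{\nu}{\lambda w.\potential{w}{Q_6}}$, where $Q_6$ is the intermediate annotation on the bound result) and then once to $e_2$ for each $w \in \supp{\nu}$, and chain the two bounds. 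The delicate part is that the potential $D$ passed on is not assigned to a concrete value but distributed across the context and the freshly-bound variable $x$; verifying that the potential of the value $w$ bound to $x$ exactly matches the ``transfer'' annotation and that the summation over $\supp{\nu}$ respects linearity of expectation is where the technical weight lies. I would also need the weakening rule \rulew\ and the sharing machinery to line up the annotations, relying on the monotonicity facts about $\log$ stated in the overview. The cost-free claim is then the special case where all ticks are removed, so the cost term vanishes and the same induction (with $c = 0$ throughout) yields $\spotential{\Gamma}{Q} \geqslant \Expect{\mu}{\lambda v.\potential{v}{Q'}}$.
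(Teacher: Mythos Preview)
Your detour through the big-step semantics has a genuine gap that you yourself flag but then talk yourself out of. You establish $\spotential{\Gamma}{Q} \geqslant c_{\mathrm{big}} + \Expect{\mu}{\lambda v.\potential{v}{Q'}}$ for the big-step cost, and then appeal to \cref{t:3} to pass to the small-step cost. But \cref{t:3}(ii) gives you $c_{\mathrm{big}} \leqslant c_{\mathrm{small}}$, and from a bound involving the \emph{smaller} quantity you cannot conclude the bound involving the \emph{larger} one. Your attempted repair, that \ruletick\ ``precisely mirrors the big-step cost accounting'', is exactly backwards: \ruletick\ adds the tick cost to the \emph{input} potential, so the cost is paid up front on every execution path regardless of termination, which is the small-step accounting. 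It is \ruletickast\ (subtracting from the \emph{output} potential) that matches the big-step rule's $\prob{\mu}\cdot\sfrac{a}{b}$, charging only on terminating paths. This asymmetry is the whole point of having two theorems; your route essentially re-proves \cref{t:2} and then tries to upgrade it to \cref{t:1}, which fails whenever the program does not almost surely terminate.

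The paper's proof avoids this by working directly with the small-step relation: it shows by main induction on the step count $n$ (and side induction on the typing derivation) that $e\sigma \tomulti{c}_n \mu$ implies $\spotential{\Gamma}{Q} \geqslant c + \sum_{v \in \supp{\rst{\mu}}}\mu(v)\cdot\potential{v}{Q'}$, then takes $n\to\infty$. The tick case is then immediate since one small step costs exactly $\sfrac{a}{b}$. The price is the \vlet\ case, which becomes substantially more delicate than in your big-step sketch: the $n$ steps must be apportioned between $e_1$ and $e_2$, but different probabilistic branches of $e_1$ may reach a value after different numbers of steps, so the proof introduces difference distributions $\xi_i = \rst{\nu_i} - \rst{\nu_{i-1}}$ to track exactly which values appear at step $i$ and then sums the MIH for $e_2$ over all such splits. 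Your compositional big-step argument for \vlet\ is cleaner, but it cannot carry the small-step cost through.
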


\begin{corollary}
  \label{c:1}
  Let $\Program$ be a well-typed program such that ticking accounts for all evaluation steps. Suppose $\tjudge{\Gamma}{Q}{e}{\alpha}{Q'}$.
  Then $e$ is positive almost surely terminating (and thus in particular almost surely terminating).
\end{corollary}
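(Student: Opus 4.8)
The plan is to prove Corollary~\ref{c:1} by instantiating Theorem~\ref{t:1} under the stated assumption that ticking accounts for all evaluation steps, so that every reduction step contributes cost exactly one. Under this assumption the accumulated cost $c$ along any evaluation $e\sigma \tomulti{c}_\infty \mu$ literally counts the (expected) number of reduction steps, \ie\ the expected length of the computation. First I would fix an arbitrary substitution $\sigma$ and an evaluation $e\sigma \tomulti{c}_\infty \mu$, and appeal directly to Theorem~\ref{t:1}, which (given the hypothesis $\tjudge{\Gamma}{Q}{e}{\alpha}{Q'}$ and well-typedness of $\Program$) yields the bound
\begin{equation*}
\spotential{\Gamma}{Q} \geqslant c + \Expect{\mu}{\lambda v. \potential{v}{Q'}} \geqslant c\tpkt
\end{equation*}
Here the second inequality uses that $\potential{v}{Q'} \geqslant 0$ for every value $v$, which holds because the resource functions $\rk$ and $p_{(\seq{a},b)}$ are non-negative and the annotation coefficients are non-negative (this is precisely the well-definedness condition $\sum_i a_i + b \geqslant 0$ together with $a_i, b \in \N$). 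Consequently the expectation term is non-negative and can be dropped.

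Next I would observe that $\spotential{\Gamma}{Q}$ is a fixed finite real number, determined by $\sigma$ and the annotation $Q$ and independent of the particular (possibly infinite) reduction sequence witnessing $e\sigma \tomulti{c}_\infty \mu$. Since the inequality above forces $c \leqslant \spotential{\Gamma}{Q} < \infty$ for \emph{every} such evaluation, we conclude that the expected cost of evaluating $e\sigma$ is finite. Because ticking accounts for all evaluation steps, this expected cost equals the expected number of reduction steps, so $e$ is PAST by the definition given in the paragraph on (Positive) Almost Sure Termination. Finally, invoking the stated fact that PAST implies AST (cf.~\cite{BG:RTA:05}), we obtain that $e$ is in particular almost surely terminating, which completes the argument.

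The main conceptual obstacle is not in the chaining of inequalities, which is routine, but in making precise the identification of the cost $c$ with the expected \emph{length} of the computation. This relies on the side convention introduced for PAST---that every statement of $\Program$ is enclosed in a ticking operation with cost one---so that the weights accumulated by $\tomulti{\cdot}$ coincide step-for-step with a step counter. I would therefore be careful to state the reduction to Theorem~\ref{t:1} only after recalling this convention, so that ``ticking accounts for all evaluation steps'' is interpreted exactly as the PAST cost model rather than as the user-chosen fine-grained cost models used elsewhere in the paper; with that reading in place, the finiteness of $\spotential{\Gamma}{Q}$ immediately discharges the PAST obligation and hence, via the known implication, AST.
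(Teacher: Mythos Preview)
Your proof is correct and is precisely the argument the paper intends: the corollary is stated without proof, as an immediate consequence of Theorem~\ref{t:1} together with the non-negativity of potentials and the known implication PAST $\Rightarrow$ AST. One minor slip in your justification: the well-definedness condition $\sum_i a_i + b \geqslant 0$ concerns the \emph{indices} of the resource functions and is what guarantees $p_{(\seq{a},b)} \geqslant 0$; it is the separate (tacit) convention that the annotation \emph{coefficients} $q_\ast, q_{(\veca,b)}$ are non-negative rationals that then yields $\potential{v}{Q'} \geqslant 0$---these are two distinct facts, not one.
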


\begin{restatable}[Soundness Theorem for $\ruletickast$]{theorem}{restatethmtickdefersound} \label{t:2}
Let $\Program$ be well-typed.
Suppose $\tjudge{\Gamma}{Q}{e}{\alpha}{Q'}$ and $\eval{\sigma}{c}{e}{\mu}$.
Then, we have $\spotential{\Gamma}{Q} \geqslant c + \Expect{\mu}{\lambda v. \potential{v}{Q'}}$.
Further, if $\tjudgecf{\Gamma}{Q}{e}{\alpha}{Q'}$, then $\spotential{\Gamma}{Q} \geqslant \Expect{\mu}{\lambda v. \potential{v}{Q'}}$.
\end{restatable}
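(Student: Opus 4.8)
The plan is to reduce the big-step judgement to its finite approximants and then pass to the limit. Concretely, I would first prove, by induction on the approximation depth $n$, the auxiliary statement: whenever $\tjudge{\Gamma}{Q}{e}{\alpha}{Q'}$ and $\evaln{\sigma}{n}{c_n}{e}{\mu_n}$, then $\spotential{\Gamma}{Q} \geqslant c_n + \Expect{\mu_n}{\lambda v. \potential{v}{Q'}}$, together with the cost-free variant that drops the cost term when $\tjudgecf{\Gamma}{Q}{e}{\alpha}{Q'}$. These two have to be established by \emph{mutual} induction, because the cost-free judgement is exactly what the \rulelet\ rule uses to pass potential through the bound expression of a \vlet. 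Since $\eval{\sigma}{c}{e}{\mu}$ is defined by $c = \lim_n c_n$ and $\mu = \lim_n \mu_n$ for $\evaln{\sigma}{n}{c_n}{e}{\mu_n}$, the theorem then follows from the auxiliary statement by taking $n \to \infty$.

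For the base case $n=0$ the approximant is either the empty subdistribution, when $e$ is not a value, so the claim collapses to $\spotential{\Gamma}{Q} \geqslant 0$, which holds because every potential is nonnegative; or it is a Dirac on a value $v$, in which case $\spotential{\Gamma}{Q} \geqslant \potential{v}{Q'}$ is precisely the constraint imposed by the syntax-directed typing rule for that value. In the inductive step I would case on the last big-step rule, equivalently on the top-level shape of $e$. I first strip any leading structural typing steps (\rulew\ and \ruleshare) by an inner induction on the typing derivation: these leave $e$ unchanged and only adjust the context annotation so that $\spotential{\Gamma}{Q_1} \geqslant \spotential{\Gamma}{Q_2}$, and this monotonicity immediately propagates the inequality. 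This reduces every case to one in which the last typing rule is the syntax-directed one matching the shape of $e$.

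The cases for \rulecoin\ and \ruletickast\ are the ones that exhibit the probabilistic and deferred-cost features, and both fall out cleanly. For \rulecoin\ with bias $p = \sfrac{a}{b}$ the rule gives $Q = p \cdot Q_1 + (1-p) \cdot Q_2$, while the big-step rule returns cost $p c_1 + (1-p) c_2$ and distribution $p\mu_1 + (1-p)\mu_2$; applying the induction hypothesis to both branches and forming the convex combination, linearity of $\spotential{\Gamma}{\cdot}$ in the annotation turns $p\,\spotential{\Gamma}{Q_1} + (1-p)\,\spotential{\Gamma}{Q_2}$ into $\spotential{\Gamma}{Q}$, and the claim follows. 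For \ruletickast, write the premise as $\tjudge{\Gamma}{Q}{e}{\alpha}{R}$, so that the conclusion output is $Q' = R - \sfrac{a}{b}$; the induction hypothesis gives $\spotential{\Gamma}{Q} \geqslant c + \Expect{\mu}{\lambda v.\potential{v}{R}}$, and since $\potential{v}{R} = \potential{v}{Q'} + \sfrac{a}{b}$ we obtain $\spotential{\Gamma}{Q} \geqslant (c + \sfrac{a}{b}\cdot\prob{\mu}) + \Expect{\mu}{\lambda v.\potential{v}{Q'}}$, whose cost summand $c + \sfrac{a}{b}\cdot\prob{\mu}$ is exactly the cost charged by the big-step tick rule. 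This exact match is the whole point: the deferred decrement of the output annotation by $\sfrac{a}{b}$ is paid for precisely by the mass $\prob{\mu}$ of terminating computations. The remaining cases (\match, the Boolean and \nondet\ conditionals, and function application, where well-typedness of $\Program$ supplies the typing of the unfolded body under the renamed substitution) only require invoking the induction hypothesis on the premises and rewriting the context potential along the constraints of the corresponding rule.

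The main obstacle is the \rulelet\ case, as in the deterministic development of~\cite{hofmann2021typebased,LMZ:2021} and the big-step argument of~\cite{WangKH20}. Here the context potential is split into a part used to type the bound expression $e_1$ and a placeholder part $D$ that must be transported across $e_1$ and recombined with the result potential before typing $e_2$. I would apply the cost-bearing hypothesis to $e_1$ to account for its cost $c_1$ and intermediate distribution $\nu$, apply the cost-free hypothesis to $e_1$ to transport $D$ onto the values in $\supp{\nu}$, and apply the cost-bearing hypothesis to $e_2$ under each extended substitution $\sigma[x \mapsto w]$; summing the latter against $\nu$ and combining with the \rulelet\ constraints yields $\spotential{\Gamma}{Q} \geqslant (c_1 + \sum_{w} \nu(w)\,c_w) + \Expect{\sum_w \nu(w)\mu_w}{\lambda v.\potential{v}{Q'}}$, which is the claim. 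Finally, for the limit step I would use that $c_n \uparrow c$, that $\mu_n \uparrow \mu$ pointwise, and that $\potential{\cdot}{Q'} \geqslant 0$, so that $\Expect{\mu_n}{\lambda v.\potential{v}{Q'}} \uparrow \Expect{\mu}{\lambda v.\potential{v}{Q'}}$ by monotone convergence; since the fixed quantity $\spotential{\Gamma}{Q}$ upper-bounds the nondecreasing, convergent right-hand sides for every $n$, it bounds their limit, giving $\spotential{\Gamma}{Q} \geqslant c + \Expect{\mu}{\lambda v.\potential{v}{Q'}}$, and the cost-free statement follows by the same argument with all tick costs absent.
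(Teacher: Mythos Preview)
Your proposal is correct and follows essentially the same approach as the paper's own proof: reduce to the finite approximants $\evaln{\sigma}{n}{c_n}{e}{\mu_n}$, argue by main induction on $n$ together with a side induction on the typing derivation to peel off structural rules, and then treat each syntax-directed case exactly as you describe (with the \ruletickast\ case hinging on the identity $\potential{v}{R} = \potential{v}{Q'} + \sfrac{a}{b}$ matching the $\prob{\mu}\cdot\sfrac{a}{b}$ charged by the big-step tick rule, and the \rulelet\ case using the cost-free hypothesis to transport the residual potential across $e_1$). Your explicit discussion of the limit passage via monotone convergence is a bit more detailed than the paper's, which simply notes that proving the inequality for every $n$ suffices; otherwise the structure and the key calculations coincide.
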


We comment on the trade-offs between Theorems~\ref{t:1} and~\ref{t:2}.
As stated in Corollary~\ref{c:1} the benefit of Theorem~\ref{t:1} is that when every recursive call is accounted for by a tick, then a type derivation implies the termination of the program under analysis.
The same does not hold for Theorem~\ref{t:2}.
However, Theorem~\ref{t:2} allows to type more programs than Theorem~\ref{t:1}, which is due to the fact that \ruletickast\ rule is more permissive than \ruletick. This proves very useful, in case termination is not required (or can be established by other means).

We exemplify this difference on the \foo\ function, see Figure~\ref{fig:foo}.
Theorem~\ref{t:2} supports the derivation of the type $\rk(t) + \log(\size{t}) + 1 \geqslant \rk(\text{\foo\ } t) + 1$, while Theorem~\ref{t:1} does not.
This is due to the fact that potential can be \enquote{borrowed} with Theorem~\ref{t:2}.
To wit, from the potential $\rk(t) + \log(\size{t}) + 1$ for \foo\ one can derive the potential $\rk(l') + \rk(r')$ for the intermediate context after both let-expression (note there is no +1 in this context, because the +1 has been used to pay for the ticks around the recursive calls). Afterwards one can restore the +1 by weakening $\rk(l') + \rk(r')$ to $\rk(\text{\foo\ } t) + 1$ (using in addition that $\rk(t) \geqslant 1$ for all trees $t$).
On the other hand, we cannot \enquote{borrow}
with Theorem~\ref{t:1} because the rule $\ruletick$ forces to pay the +1 for the recursive call immediately
(but there is not enough potential to pay for this).
In the same way, the application of rule \ruletickast\ and Theorem~\ref{t:2} is essential to establish the logarithmic amortised costs of randomised splay trees.
(We note that the termination of \foo\ as well as of \splay\ is easy to establish by other means: it suffices to observe that recursive calls are on sub-trees of the input tree).

\section{Implementation and Evaluation}
\label{sec:implementation}
\begin{table}[t]
\setlength{\tabcolsep}{0.5em}
\setlength\dashlinedash{0.5pt}
\setlength\dashlinegap{1.0pt}
\centering
\begin{tabular}{|c||c:l|c:l|c:l|}
\hline
\diagbox{$p$}{$c$} & \multicolumn{2}{c}{$\sfrac{1}{2}$} & \multicolumn{2}{c}{$\sfrac{1}{3}$} & \multicolumn{2}{c|}{$\sfrac{2}{3}$} \\
\hline
\hline
$\sfrac{1}{2}$ & $\sfrac{9}{8}$   & 1.125          & $1$              &  1                  & $\sfrac{5}{4}$   & 1.25 \\
$\sfrac{1}{3}$ & $1$              & 1              & $\sfrac{5}{6}$   & $0.8\dot{3}$        & $\sfrac{7}{6}$ & $1.\dot{6}$ \\
$\sfrac{2}{3}$ & $\sfrac{55}{36}$ & $1.52\dot{7}$  & $\sfrac{77}{54}$ & $1.4\overline{259}$ & $\sfrac{44}{27}$ & $1.\overline{629}$ \\
\hline
\end{tabular}
\vspace{2mm}
\caption{Coefficients $q$ such $q \cdot \log(|t|)$ is a bound on the expected amortized complexity of \splay\ depending on the probability $p$ of a rotation and the cost $c$ of a recursive call, where the cost of a rotation is $1 - c$. Coefficients are additionally presented in decimal representation to ease comparison.}
\vspace{-4mm}
\label{tab:matrix}
\end{table}

\paragraph{Implementation.} Our prototype \atlas\ is an extension of the tool described in~\cite{LMZ:2021}.
In particular, we rely on the preprocessing steps and the implementation of the weakening rule as reported in~\cite{LMZ:2021} (which makes use of Farkas' Lemma in conjunction with selected mathematical facts about the logarithm).
We only use the fully-automated mode reported in~\cite{LMZ:2021}.
We have adapted the generation of the constraint system to the rules presented in this paper.
We rely on Z3~\cite{conf/tacas/MouraB08} for solving the generated constraints.
We use the optimisation heuristics of~\cite{LMZ:2021} for steering the solver towards solutions that minimize the resulting expected amortised complexity of the function under analysis.

\paragraph{Evaluation.}
We present results for the benchmarks described in Section~\ref{sec:overview} (plus a randomised version of splay heaps, the source code can be found in \aword{}) in Table \ref{tab:results}.
Table~\ref{tab:resources} details the computation time of our evaluations.
To the best of our knowledge this is the first time that an expected amortised cost could be inferred for these data structures.
By comparing the costs of the operations of randomised splay trees and heaps to the costs of their deterministic versions (see Table \ref{tab:results}), one can see the randomised variants have equal or lower complexity in all cases (as noted in Section~\ref{sec:overview} we have set the costs of the recursive call and the rotation to $\sfrac{1}{2}$, such that in the deterministic case, which corresponds to a coin toss with $p=1$, these costs will always add up to one).
Clearly, setting the costs of the recursion to the same value as the cost of the rotation does not need to reflect the relation of the actual costs.
A more accurate estimation of the relation of these two costs will likely require careful experimentation with data structure implementations, which we consider orthogonal to our work.
Instead, we report that our analysis is readily adapted to different costs and different coin toss probabilities.
We present an evaluation for different values of $p$, recursion cost $c$ and rotation cost $1-c$ in Table~\ref{tab:matrix}.
The memory usage according to Z3's ``max memory'' statistic was 7129MiB per instance.
The total runtime was 1H45M, with an average of 11M39S and a median of 2M33S. Two instances took longer time (36M and 49M).

\begin{table}[t]
\setlength{\tabcolsep}{0.5em}
\centering
\begin{tabular}{|l|rrrrr|}
\hline
Module & Functions & Lines & Assertions & Time       & Memory \\
\hline
\lstinline|RandSplayTree|    & 4 & 129 & 195 339 & 33M27S & 19424.44 \\
\lstinline|RandSplayHeap|    & 2 &  34 &  77 680 &  6M15S & 14914.51 \\
\lstinline|RandMeldableHeap| & 3 &  15 &  25 526 &    20S &  4290.67 \\
\lstinline|CoinSearchTree|   & 3 &  24 &  14 045 &     4S &  1798.59 \\
\lstinline|Tree|             & 1 &   5 &     151 &  $<$1S &    45.23 \\
\hline
\end{tabular}
\vspace{2mm}
\caption{Number of assertions, solving time and maximum memory usage (in mebibytes) for the combined analysis of functions per-module. The number of functions and lines of code is given for comparison.}
\vspace{-4mm}
\label{tab:resources}
\end{table}

\paragraph{Deterministic benchmarks.}
For comparison we have also evaluated our tool \atlas\ on the benchmarks of ~\cite{LMZ:2021}.
All results could be reproduced by our implementation.
In fact, for the function \flst{SplayHeap.insert} it yields an improvement of
$\sfrac{1}{4} \log(|h|)$, \ie\
$\sfrac{1}{2} \log(|h|) + \log(|h|+1) + \sfrac{3}{2}$ compared to
$\sfrac{3}{4} \log(|h|) + \log(|h|+1) + \sfrac{3}{2}$.
% GM: Confusing, as partition is not in the table
% Similarly, the bound for \flst{SplayHeap.partition} improves by $\sfrac{1}{4} \log(|h|)$.
We note that we are able to report better results because we have generalised the resource functions $p_{(\seq{a}[m],b)}(\seq{t}[m]) \defsym \log(\seqx{a_\i \cdot \size{t_\i}}[m][+] + b)$ to also allow negative values for $b$ (under the condition that $\sum_i a_i + b \ge 1$) and our generalised \rulelettreecf\ rule can take advantage of these generalized resource functions (see \aref{fig:5} for a statement of the rule and the proof of its soundness as part of the proof of Theorem~\ref{t:2}).
%
%
%
%\begin{comment}
%\begin{table}
%\centering
%\begin{tabular}{|l|ll|}
%\multicolumn{2}{|c}{SplayTree} \\
%
%splay, splay\_max &
%$\sfrac{3}{2} \log(|t|)$ &
%$\sfrac{3}{2} \log(|t|)$
%\\
%
%insert &
%$2 \log(|t|) + \sfrac{1}{2}$ &
%$2 \log(|t|) + \sfrac{3}{2}$ &
%\\
%
%delete &
%$\sfrac{5}{2} \log(|t|) + 2$
%$\sfrac{5}{2} \log(|t|) + 3$ \\
%
%\multicolumn{2}{|l}{SplayHeap} \\
%
%\flst{insert} &
%$\sfrac{1}{2} \log(|h|) + \log(|h|+1) + \sfrac{3}{2}$ &
%$\sfrac{3}{4} \log(|h|) + \log(|h|+1) + \sfrac{3}{2}$ \\
%
%\flst{delete\_min} &
%$\log(|h|)$ \\
%
%\multicolumn{2}{|l}{PairingHeap} \\
%
%insert &
%$\sfrac{1}{2} \log(|h|) + 1$ &
%$\sfrac{1}{2} \log(|h|) + 1$ \\
%
%delete\_min &
%$\log(|h|) + 1$ \\
%
%\end{tabular}
%\caption{Results on Deterministic Benchmarks.}
%\label{tab:deterministic-results}
%
%\end{table}
%\end{comment}

\section{Conclusion}
\label{sec:conclusion}
In this paper, we present the first fully-automated \emph{expected amortised cost analysis} of self-adjusting data structures, that is, of \emph{randomised splay trees}, \emph{randomised splay heaps} and \emph{randomised meldable heaps}, which so far have only (semi-) manually been analysed in the literature.
In future work, we envision to extend our analysis to related probabilistic settings such as skip lists~\cite{Pugh90} and randomised binary search trees~\cite{MartinezR98}.
We note that adaptation of the framework developed in this paper to new benchmarks will likely require to identify new potential functions and the extension of the type-effect-system with typing rules for these potential functions.
Further, on more theoretical grounds we want to clarify the connection of the here proposed expected amortised cost analysis with Kaminski's \ert-calculus, cf.~\cite{KKMO:ACM:18}, and study whether the expected cost transformer is conceivable as a potential function.

%Kaminski et al.\ generalises Dijkstra's weakest precondition calculus to an expected runtime transformer~$\mathsf{ert}$. The obtained resource analysis expresses the expected cost in terms of the expected cost of an assumed continuation, thus providing composability of the analysis.
%Kaminski's \ert-calculus can be extended to an \emph{expected cost transformer}, denoted as $\ect[e](\phi)$, over a functional expression $e$ and expected cost $\phi$ of the \emph{continuation} of $e$, cf.~\cite{AMS:2020,ABD:2021}.
% More precisely, we have $\ect[e](\phi) = \ecost(e) + \evalue(\phi)$, where
% $\ecost(e)$ denotes the expected cost of the evaluation of~$e$ and $\evalue(\phi)$ denotes the expected value of $\phi$
% wrt.\ to the distribution of values to which $e$ evaluates.
%Technically, $\phi$ maps program values to real numbers. Thus $\phi$ is conceivable as a potential function
%and the difference in (expected) potential provides an upper bound on the expected cost $\ecost{e}$. 
% GM: I'm now less sure that this correspondence holds; in particular for non-deterministic programs the
% dependency of ecost on the normalform distribution cannot be guaranteed

\newpage
% \bibliographystyle{splncs04}
% \bibliography{paper}

\ifthenelse{\boolean{appendix}}{%
Appendix
\newpage
\appendix
\counterwithin{figure}{section}
\counterwithin{table}{section}
\section{Benchmark: Probabilistic Analysis of Binary Search Trees}
\label{sec:binary-search-trees}

We present a probabilistic analysis of a deterministic binary search tree, which offers the usual \bstcontains, \bstinsert, and \bstdelete\ operations, where \bstdelete\ uses \bstdelmax\ as a subroutine (the source code of all operations is given in Fig.~\ref{fig:appendix-coinsearchtree}).
We assume that the elements inserted, deleted and searched for are equally distributed;
hence, we conduct a probabilistic analysis by replacing every comparison with a coin toss of probability one half.
We will refer to the resulting data structure as Coin Search Tree in our benchmarks.
Our tool \atlas\ infers an logarithmic expected amortised cost for all operations, e.g., for \bstinsert\ and \bstdelmax\ we obtain
\begin{gather*}
  \sfrac{3}{2} \rk(t) + \sfrac{1}{2} \log(\size{t})  \geqslant c_{\text{\bstinsert}}(t) + \sfrac{3}{2} \rk(\text{\bstinsert}\ t) \\
  \sfrac{3}{2} \rk(t) + \sfrac{1}{2} \log(\size{t})  \geqslant c_{\text{\bstdelmax}}(t) + \sfrac{3}{2} \rk(\text{\bstdelmax}\ t)  \tkom
\end{gather*}
from which we obtain an expected amortised cost of $\sfrac{1}{2} \log(\size{t})$ for both functions.

\section{Omitted Definitions}
\subsection{Type System: Non-Probabilistic Part}

\begin{figure}[t]
\begin{mathpar}
\input{rules/typ/leaf.tex}
\and
\input{rules/typ/node.tex}
\and
\input{rules/typ/cmp.tex}
\and
\input{rules/typ/var.tex}
\and
\input{rules/typ/pair.tex}
\and
\input{rules/typ/ite.tex}
\and
\input{rules/typ/match.tex}
\and
\input{rules/typ/matchpair.tex}
\and
\input{rules/typ/letnotree.tex}
\and
\input{rules/typ/lettreecf.tex}
\and
\input{rules/typ/appticks.tex}
\end{mathpar}
To ease notation, we set $\veca \defsym \seq[1]{a}[m]$, $\vecb \defsym \seq[1]{b}[k]$ for vectors of indices $a_i, b_j \in \N$.
Further,
$i \in \{\upto{m}\}$, $j \in \{\upto{k}\}$, and $a,b,d \in \N$ and $c,e \in \Z$, where we recall that $c,e$ must be chosen such that
$\sum_i a_i + \sum_j b_j + c \geqslant 1$ resp. $a + b + c \geqslant 1$ as well as $\sum_j b_j + d + e \geqslant 1$ are satisfied.
Sequence elements of annotations, which are not constrained are set to zero.
Note that the conditions in \rulepair\ on coefficients are vacuously true,
if $\alpha_1 \not= \TreeShort$ and $\alpha_2 \not= \TreeShort$.
\caption{Syntax-Directed Type Rules: Non-Probabilisitc Part.}
\label{fig:5}
\end{figure}

\begin{figure}[t]
\begin{mathpar}
\input{rules/typ/wvar.tex}
\and
\input{rules/typ/share.tex}
\and
\input{rules/typ/w.tex}
\and
\input{rules/typ/shift.tex}
\end{mathpar}
%
% To ease notation, we set $\veca \defsym \seq[1]{a}[m]$, $\vecb \defsym \seq[1]{b}[k]$ for vectors of indices $a_i, b_j \in \N$. Further,
% $i \in \{\upto{m}\}$, $j \in \{\upto{k}\}$ and $a,b,c,d,e \in \N$.
% Sequence elements of annotations, which are not constrained are set to zero.
\caption{Structural Type Rules: Non-Probabilisitc Part.}
\label{fig:5b}
\end{figure}

The non-probabilistic and structural typing rules are given in Figure~\ref{fig:5} and~\ref{fig:5b} respectively.
%
% GM: new
Let~$\Gamma$ be a variable context, $Q, Q'$ annotations and let~$e$ be an expression.
The typing rule for rule $\rulelettreecf$ makes use of the cost-free typing judgment
$\tjudgecfndt{\Gamma}{Q}{e}{\alpha}{Q'}$ that differs from the standard cost-free typing relation
$\tjudge{\Gamma}{Q}{e}{\alpha}{Q'}$ insofar that all probabilistic choices in~$e$ are replaced
by non-deterministic choices. We call the expression $e'$ obtained from $e$ through this adaption
the non-deterministic version of~$e$.

\subsection{Soundness Theorems}

The proof of the soundness theorems makes use of the following lemma, whose proof can be found in~\cite{hofmann2021typebased}.
\begin{lemma}
\label{lem:log-inequality}
Assume $\sum_i q_i \log a_i \geqslant q \log b$ for some rational numbers $a_i,b > 0$ and $q_i \geqslant q$.
Then, $\sum_i q_i \log (a_i + c) \geqslant q \log (b + c)$ for all $c \geqslant 1$.
\end{lemma}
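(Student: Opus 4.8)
The plan is to massage the hypothesis into a clean multiplicative form and then reduce the statement to an elementary product inequality. Throughout I focus on the relevant case $q > 0$; if $q = 0$ the claim is immediate, since $c \geqslant 1$ and $a_i > 0$ force $a_i + c > 1$, hence $\log(a_i + c) > 0$, and $q_i \geqslant q = 0$ already gives $\sum_i q_i \log(a_i + c) \geqslant 0$ without even using the hypothesis. (In our applications the coefficients satisfy $q \geqslant 0$, so the case $q < 0$ does not arise.)

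First I would argue that we may assume the hypothesis holds with \emph{equality}. Given $\sum_i q_i \log a_i \geqslant q \log b$ with $q > 0$, define $b^\ast$ by $q \log b^\ast = \sum_i q_i \log a_i$; then $b^\ast \geqslant b$, and since $q \log(\cdot + c)$ is monotone in its argument, a proof of the conclusion for $b^\ast$ yields the conclusion for $b$. Hence I may assume $\sum_i q_i \log a_i = q \log b$, equivalently $b = \prod_i a_i^{w_i}$ where $w_i \defsym q_i / q \geqslant 1$ (this is exactly where $q_i \geqslant q$ is turned into a usable form).

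Dividing the target inequality by $q > 0$ and exponentiating, the goal becomes the purely multiplicative statement
\[
  \prod_i (a_i + c)^{w_i} \;\geqslant\; \prod_i a_i^{w_i} + c ,
\]
with $w_i \geqslant 1$, $a_i > 0$ and $c \geqslant 1$. I would prove this factor by factor: superadditivity of $t \mapsto t^{w_i}$ on $\Rplus$ (valid as $w_i \geqslant 1$) gives $(a_i + c)^{w_i} \geqslant a_i^{w_i} + c^{w_i}$; multiplying these out and discarding the nonnegative cross terms yields $\prod_i(a_i^{w_i} + c^{w_i}) \geqslant \prod_i a_i^{w_i} + \prod_i c^{w_i} = \prod_i a_i^{w_i} + c^{\,\sum_i w_i}$; and finally $c^{\sum_i w_i} \geqslant c$ because $c \geqslant 1$ and $\sum_i w_i \geqslant 1$. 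Chaining these establishes the claim.

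The main obstacle is the first move: recognising that the hypothesis, an inequality between logarithms with a priori different coefficients $q_i$ and $q$, can be normalised — via the reduction to equality together with the substitution $w_i = q_i/q$ — into the homogeneous exponent form $b = \prod_i a_i^{w_i}$ with all weights at least $1$. Once this is in place the residual product inequality is elementary. I would also keep careful track of where $c \geqslant 1$ is genuinely needed, namely both to keep every $a_i + c$ above $1$ in the degenerate case and to pass from $c^{\sum_i w_i}$ back to $c$, since this is precisely the hypothesis on $c$ that cannot be weakened.
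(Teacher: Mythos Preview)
The paper does not actually give a proof of this lemma; it defers to~\cite{hofmann2021typebased}. So there is nothing to compare against directly, and the relevant question is simply whether your argument is correct.

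It is. The reduction to equality via $b^\ast$ is valid for $q>0$ (monotonicity of $t\mapsto q\log(t+c)$), and the substitution $w_i=q_i/q\geqslant 1$ cleanly turns the hypothesis into $b=\prod_i a_i^{w_i}$. The residual product inequality $\prod_i(a_i+c)^{w_i}\geqslant \prod_i a_i^{w_i}+c$ is established by your three steps, each of which checks out: superadditivity $(x+y)^w\geqslant x^w+y^w$ for $w\geqslant 1$, $x,y\geqslant 0$ holds (divide by $(x+y)^w$ and use $t^w\leqslant t$ on $[0,1]$); dropping the nonnegative cross terms in $\prod_i(A_i+C_i)$ leaves $\prod_i A_i+\prod_i C_i$; and $c^{\sum_i w_i}\geqslant c$ uses $c\geqslant 1$ and $\sum_i w_i\geqslant 1$ (the latter needs at least one index~$i$, which is the intended setting). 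The $q=0$ case is handled correctly.

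Your disclaimer about $q<0$ is not just a convenience: the lemma as literally stated is \emph{false} for $q<0$. For instance, with two terms $q_1=q_2=q=-1$, $a_1=a_2=\tfrac{1}{10}$, $b=\tfrac{1}{100}$, the hypothesis holds with equality, but at $c=1$ one gets $(1.1)^2=1.21>1.01=b+c$, so $-\log(1.1)^2<-\log(1.01)$. Since all coefficients in the paper's annotations are nonnegative, restricting to $q\geqslant 0$ is exactly what is needed.
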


\restatethmticknowsound*
\begin{proof}
We first deal with the case that $\Pi$ ends in a structural rule, cf.~Figure~\ref{fig:5b}:

\medskip
\emph{Case.} Suppose the last rule in $\Pi$ be of the following form:
\begin{equation*}
  \infer{
  \tjudge{\Gamma}{Q+K}{e}{\alpha}{Q'+K}
}{%
  \tjudge{\Gamma}{Q}{e}{\alpha}{Q'}
} \tkom
\end{equation*}
where $K \geqslant 0$. By SIH, we have that $\spotential{\Gamma}{Q} \geqslant c + \sum_{v \in \supp{\rst{\mu}}} \mu(v) \cdot \potential{v}{Q'}$, from
which we obtain
\begin{gather*}
  \spotential{\Gamma}{Q+K} = \spotential{\Gamma}{Q}+K \geqslant {}
  \\
  {} \geqslant c + \sum_{v \in \supp{\rst{\mu}}} \mu(v) \cdot \potential{v}{Q'} + K \geqslant
  c + \sum_{v \in \supp{\rst{\mu}}} \mu(v) \cdot \potential{v}{Q'+K}
  \tkom
\end{gather*}
as $\sum_{v \in \supp{\rst{\mu}}} \mu(v) \leqslant 1$ and $\potential{v}{Q'+K}=\potential{v}{Q'}+K$.

\medskip
\emph{Case.} Let $\Pi$ end in the following weakening rule applicaton
\begin{equation*}
  \infer[\rulew]{%
  \tjudge{\Gamma}{Q}{e}{\alpha}{Q'}
}{%
  \tjudge{\Gamma}{P}{e}{\alpha}{P'}
  &
  % \begin{minipage}[b]{25ex}
  %   $\potential{\Gamma}{P} \leqslant \potential{\Gamma}{Q}$
  %   \\[1ex]
  %   $\potential{\Gamma}{P'} \geqslant \potential{\Gamma}{Q'}$
  % \end{minipage}
  \potential{\Gamma}{P} \leqslant \potential{\Gamma}{Q}
  &
  \potential{\Gamma}{P'} \geqslant \potential{\Gamma}{Q'}
}

  \tpkt
\end{equation*}
By SIH, we have $\spotential{\Gamma}{P} \geqslant c + \sum_{v \in \supp{\rst{\mu}}} \mu(p) \cdot \potential{v}{P'}$. Due to the
assumption of the \rulew\ rule, we have
\begin{align*}
  \spotential{\Gamma}{Q} & \geqslant \spotential{\Gamma}{P}
  \\
                          & \geqslant c + \sum_{v \in \supp{\rst{\mu}}} \mu(p) \cdot \potential{v}{P'}
  \\
                          & \geqslant c + \sum_{v \in \supp{\rst{\mu}}} \mu(p) \cdot \potential{v}{Q'}
                            \tpkt
\end{align*}

\medskip
\emph{Case.} $\ruleshare$ and $\rulewvar$ can be dealt with in the same way, we refer the reader to~\cite{hofmann2021typebased} for the details.

We now assume that $\Pi$ ends in a syntax-directed rule, cf.~Figure~\ref{fig:5}, and proceed by a case distinction on $e\sigma$, respectively the first step of $e\sigma \tomulti{c}_n \mu$:

\medskip
\emph{Case.} First we assume $e\sigma$ is a value.
By definition of $\tomulti{\cdot}$ we have $\mu = \{ {v} \}$ and $c = 0$.
There are several subcases to consider, eg. $e\sigma = \tree{t}{b}{u}$, $e\sigma = \leaf$, $e\sigma = \pair{e_1}{e_2}$, etc.
For these cases we can essentially proceed as in the non-probabilistic setting, cf.~\cite{hofmann2021typebased}.
%
% Subcase Node
% Exemplarily, we consider the subcase, where $e\sigma = \tree{t}{b}{u}$. Let $\Pi$ consist of a single application of the rule \rulenode, see Figure~\ref{fig:5}, that is,
% %
% \begin{equation*}
% \infer[\rulenode]{\tjudge{\typed{x_1}{\TreeShort},\typed{x_2}{\BaseShort},\typed{x_3}{\TreeShort}}{Q}{\flsttree{x_1}{x_2}{x_3}}{\TreeShort}{Q'}}{%
%     q_1 = q_2 = q'_\ast
%     &
%     q_{(1,0,0)} = q_{(0,1,0)} = q'_\ast
%     &
%     q_{(a,a,b)} = q'_{(a,b)}
%   }
%   \tpkt
% \end{equation*}
% %
% By definition, we have $Q = [q_1, q_2] \cup [(q_{(a_1,a_2,b)})_{a_i,b \in \N}]$ and
% $Q' = [q'_\ast] \cup [(q'_{(a',b')})_{a',b' \in \N}]$.
% We set $\Gamma \defsym \typed{x_1}{\TreeShort}, \typed{x_2}{\BaseShort}, \typed{x_3}{\TreeShort}$ as well as
% $x_1\sigma = t$, $x_2\sigma = b$, and $x_3\sigma = u$. We conclude that
% %
% \begin{equation*}
%   \spotential{\Gamma}{Q} = \potential{t,u}{Q} \geqslant \potential{\tree{t}{b}{u}}{Q'}
%   \tkom
% \end{equation*}
% %
% from which, we conclude $\spotential{\Gamma}{Q} \geqslant 0 + \sum_{(e\sigma)^1 \in \mu} 1 \cdot \potential{e\sigma}{Q'}$.

Exemplarily, we consider the subcase where $e\sigma = \pair{e_1}{e_2}$.
\begin{equation*}
  % \infer[\rulepair]{
%   \tjudge{\typed{x_1}{\alpha_1}, \typed{x_2}{\alpha_2}}{Q}{\flstpair{e_1}{e_2}}{\alpha_1 \times \alpha_2}{Q'}
% }{%
% \text{for at most one $i$, $\alpha_i = \TreeShort$}
% \quad
% \begin{minipage}[b]{40ex}
% if $i=1$ then $  q_1 = q'_\ast$, $q_{(a,0,c)} = q'_{(a,c)}$
% \\
% otherwise $q_2 = q'_\ast$, $q_{(0,b,c)} = q'_{(a,c)}$
% \end{minipage}
% }
\infer[\rulepair]{
  \tjudge{\typed{x_1}{\alpha_1}, \typed{x_2}{\alpha_2}}{Q}{\flstpair{x_1}{x_2}}{\alpha_1 \times \alpha_2}{Q'}
}{%
\text{for at most one $i$, $\alpha_i = \TreeShort$}
&
q_i = q'_\ast
&
q_{(a,c)} = q'_{(a,c)}
}
  \tpkt
\end{equation*}
By definition and the constraints incorporated in \rulepair, we obtain
\begin{align*}
  \spotential{\typed{x_1}{\alpha_1}, \typed{x_2}{\alpha_2}}{Q} & \geqslant
                                                                 \potential{\pair{e_1}{e_2}}{Q'}
  \\
                                                               & = \sum_{v \in \{ {\pair{e_1}{e_2}}^1 \}} 1 \cdot \potential{v}{Q'}
                                                                 \tkom
\end{align*}
from which the claim follows.

% GM: ticking in LNF
% Now let $e$ be a ticking expression. By definition $\{ (\tick)^1 \} \tomulti{\sfrac{a}{b}}_0 \{ \}$ and let the last rule in $\Pi$ be of the following form:
% %
% \begin{equation*}
%   \input{rules/typ/ticking.tex}
%   \tpkt
% \end{equation*}
% %
% We obtain,
% \begin{align*}
%   \spotential{\Gamma}{Q} & = q_c = \sfrac{a}{b} + q'_c
%   \\
%                          & = \potential{\tick[$a$][$b$]}{Q'}
%   \\
%                          & = \sum_{v \in \{ \tick[$a$][$b$] \}} 1 \cdot \potential{v}{Q'}
%                            \tpkt
% \end{align*}
% %
% This concludes the theorem for the case that $n=0$ or that $e$ is a value. In the following, we assume $n > 0$ and fix $n$ such that $\{ {e\sigma}^1 \} \tomulti{c}_n  \mu$.

\medskip
\emph{Case.} Consider
\begin{equation*}
  e = \flstk{match}\ x\ \flstk{with}
     \textup{\lstinline{|}}\ \leaf\ \arrow e_1\
     \textup{\lstinline{|}}\ \textup{\tree{x_1}{x_2}{x_3}}\ \arrow e_2
    \tkom
\end{equation*}
and suppose further $x\sigma = \tree{lu}{b}{v}$, that is,
$\{e\sigma\} \tomulti{0} \{e_2\} \tomulti{c}_{n-1} \mu$.
Because $\Pi$ ends with a syntax-directed rule, $\Pi$ must in fact end with an application of the \rulematch\ rule, ie.
\begin{equation*}
\infer[\rulematch]{
    \tjudge{\Gamma, \typed{x}{\TreeShort}}{Q}{\flstk{match }x\flstk{ with }
    \flst{| }\flstc{leaf}\flst{ -> }e_1
    \flst{| }\flsttree{x_1}{x_2}{x_3}\flst{ -> }e_2}{\alpha}{Q'}
}{%
  \begin{minipage}[b]{25ex}
    $r_{(\veca,a,a,b)} = q_{(\veca,a,b)}$\\[1ex]
    $p_{(\veca,c)} = \sum_{a+b=c} q_{(\veca,a,b)}$\\[2ex]
    $\tjudge{\Gamma}{P+q_{m+1}}{e_1}{\alpha}{Q'}$
  \end{minipage}
  &
  \begin{minipage}[b]{45ex}
      $r_{m+1} = r_{m+2} = q_{m+1} \quad \quad \quad \quad q_i = r_i = p_i$
    \\[1ex]
    $r_{(\vec{0},1,0,0)} = r_{(\vec{0},0,1,0)} = q_{m+1}$
    \\[2ex]
    $\tjudge{\Gamma, \typed{x_1}{\TreeShort}, \typed{x_2}{\BaseShort}, \typed{x_3}{\TreeShort}}{R}{e_2}{\alpha}{Q'}$
  \end{minipage}
}
 \tpkt
\end{equation*}
Note that $q_{m+1}$ denotes the coefficient of $\rk(x\sigma)$ in the definition of $\spotential{\Gamma,\typed{x}{\TreeShort}}{Q}$.
By definition and the constraints given in the rule, we obtain:
\begin{equation*}
  \spotential{\Gamma,\typed{x}{\TreeShort}}{Q} =
  \spotential{\Gamma, \typed{x_1}{\TreeShort}, \typed{x_2}{\BaseShort}, \typed{x_3}{\TreeShort}}{R}
\tpkt
\end{equation*}
By MIH we have
$\spotential{\Gamma, \typed{x_1}{\TreeShort}, \typed{x_2}{\BaseShort}, \typed{x_3}{\TreeShort}}{R}
\geqslant c + \sum_{v \in \supp{\rst{\mu}}} \mu(v) \potential{v}{Q'}$,
from which the case follows directly.

\medskip
\emph{Case.} Consider
\begin{equation*}
e = \flstk{let }x\flst{ = }e_1\flstk{ in }e_2.
\end{equation*}
In order to prove the claim for ${e\sigma} \tomulti{c}_n \mu$ we need to split the $n$-step derivation into $n_1$-step and $n_2$-step derivations for $e_1$ and $e_2$ with $n_1+n_2 + 1= n$, where the one step accounts for substituting the value to which $e_1$ has evaluated into $e_2$.

However, we cannot only consider one such split because evaluating $e_1$ to a normal form will in general need a different number of steps according to the probabilistic choices encountered in the derivation.
Hence, we will consider all possible splits.

For this, we consider $e_1\sigma \tomulti{c_i}_i \nu_i$ for all $0 \leqslant i \leqslant n$.
We recall that the $\nu_i$ are pointwise ordered on values, ie.\
we have ${\rst{\nu_i}} \leqslant {\rst{\nu_j}}$ for $i\leqslant j$.
Hence, we can define $\xi_i = {\rst{\nu_i}}-{\rst{\nu_{i-1}}}$ for all $0 < i \leqslant n$.
Note that for $w_i^{p_i} \in \xi_i$ we have that the probability that $e_1\sigma$ evaluates to the value $w_i$ in $i$ steps is exactly $p_i$.

%We first consider the case where ${e\sigma}$ is not of tree type.
Let $w$ be some value to which $e_1\sigma$ has evaluated to in $i$ steps.
We then note that $\flstk{let }x\flst{ = }w\flstk{ in }e_2 \tomulti{0}_1 \{ {e_2[x \mapsto w]} \}$.
Thus, we can apply the MIH to $e_2\sigma[x \mapsto w]$ and obtain that
\begin{gather*}
  e_2\sigma[x \mapsto w] \tomulti{c_{w,i}}_{n-i-1} \mu_{w,i}\\
  {} \Rightarrow \potential{{\sigma[x \mapsto w]};{\Delta, \typed{x}{\alpha}}}{R} \geqslant c_{w,i} + \sum_{v \in \supp{{\mu_{w,i}\restriction_{V}}}} \mu_{w,i}(v) \cdot \potential{v}{Q'} \tag{$\dag$}
  \tkom
\end{gather*}
for suitably defined distributions $\mu_{w,i}$ and costs $c_{w,i}$.
We now consider the SIH applied to $e_1\sigma \tomulti{c_1}_n \nu$, ie.\ we have that
\begin{equation*}
  e_1\sigma \tomulti{c_1}_n \nu \Rightarrow
  \potential{{\sigma};{\Gamma}}{P} \geqslant c_1 + \sum_{w \in \supp{{\nu}\restriction_{V}}} \nu(w) \cdot \potential{w}{P'} \tag{$\ddag$}
  \tpkt
\end{equation*}
By the definition of the $\xi_i$ we have that $\rst{\nu} = \sum_{i=1}^{n} \xi_i$.
We then consider ${e\sigma} \tomulti{c}_n \mu$.
We observe that $\rst{\mu} = \sum_{i=1}^{n} \sum_{w^{p_i} \in \xi_i} p_i\cdot \rst{\mu_{w,i}}$ and $c = c_1 + \sum_{i=1}^{n} \sum_{w^{p_i} \in \xi_i} p_i\cdot c_{w,i}$ for distributions $\mu_{w,i}$ and costs $c_{w,i}$ defined as above.
Further, we will establish below that
\begin{multline*}
      \spotential{\Gamma,\Delta}{Q} + \sum_{w \in \supp{{\nu}\restriction_{V}}} \nu(w) \cdot \potential{w}{P'} \ge  \\
   \potential{{\sigma};{\Gamma}}{P} +
    \sum_{w \in \supp{{\nu}\restriction_{V}}} \nu(w) \cdot
                                  \potential{{\sigma[x \mapsto w]};{\Delta, \typed{x}{\alpha}}}{R} \tag{$\star$} \tpkt
\end{multline*}

We finally calculate using $(\dag)$, $(\ddag)$ and $(\star)$ that
\begin{align*}
  \spotential{\Gamma,\Delta}{Q} & \geqslant c_1 + \sum_{w \in \supp{{\nu}\restriction_{V}}} \nu(w) \cdot
                                  \potential{{\sigma[x \mapsto w]};{\Delta, \typed{x}{\alpha}}}{R}
  \\
  & = c_1 + \sum_{i=1}^{n} \sum_{w \in \supp{\xi_i}} \xi_i(w) \cdot  \potential{{\sigma[x \mapsto w]};{\Delta, \typed{x}{\alpha}}}{R}
  \\
  & \geqslant c_1 + \sum_{i=1}^{n} \sum_{w^{p_i} \in \xi_i} p_i \cdot (c_{w,i} + \sum_{v \in \supp{{\mu_{w,i}\restriction_{V}}}} \mu_{w,i}(v) \potential{v}{Q'})
  \\
  & = c_1 + \sum_{i=1}^{n} \sum_{v_i^{p_i} \in \xi_i} p_i \cdot c_{w,i}\\
  & \quad \quad  + \sum_{i=1}^n \sum_{w^{p_i} \in \xi_i} p_i \cdot \sum_{v \in \supp{{\mu_{w,i}}\restriction_{V}}} \mu_{w,i}(v) \potential{v}{Q'}\\
  & = c + \sum_{v \in \supp{{\mu}\restriction_{V}}} \mu(v) \cdot \potential{v}{Q'})
    \tpkt
\end{align*}

In the first line, we employ property~$(\star)$ together with the observation that
$(\dag)$ implies $\potential{{\sigma};{\Gamma}}{P} - \sum_{w \in \supp{{\nu}\restriction_{V}}} \nu(w) \cdot \potential{w}{P'} \geqslant c_1$.

It remains to establish $(\star)$.
For this we proceed by a case distinction on whether $e_1$ is of tree type, \ie{}, whether the rule $\Pi$ ends
in an application of the $\rulelettreecf$- or of the $\ruleletnotree$-rule.
We treat the simpler case first and consider that $e_1$ is not of tree type.
Then, $\Pi$ ends in an application of the $\ruleletnotree$-rule, ie.\
\begin{equation*}
\small
\infer[\ruleletnotree]{
  \tjudge{\Gamma, \Delta}{Q}{\flstk{let }x\flst{ = }e_1\flstk{ in }e_2}{\beta}{Q'}
}{%
  \begin{minipage}[b]{20ex}
    % $p_i = q_i \quad p_{(\veca,c)} = q_{(\veca, \vec{0}, c)}$\\[2ex]
    % $\tjudge{\Gamma}{P}{e_1}{\alpha}{P'}$
    $p_i = q_i$\\[1ex]
    $p_{(\veca,c)} = q_{(\veca, \vec{0}, c)}$\\[2ex]
    $\tjudge{\Gamma}{P}{e_1}{\alpha}{P'}$
  \end{minipage}
  &
  \begin{minipage}[b]{25ex}
    $\forall \vec{b} \not= \vec{0} \ ( q_{(\vec{0},\vec{b},c)} = r_{(\vec{b},c)})$ \\[2ex]
    $\tjudge{\Delta, \typed{x}{\alpha}}{R}{e_2}{\beta}{Q'}$
  \end{minipage}
  &
  \begin{minipage}[b]{25ex}
    $r_{(\vec{0},c)} = p'_{(c)} \quad r_j = q_{m+j}$\\[2ex]
    $\alpha \not= \TreeShort$
  \end{minipage}
}
 \tpkt
\end{equation*}
We note that $(\star)$ follows directly from the constraints in the $\ruleletnotree$ together with the fact that $\nu$ is a (sub-)distribution.

Finally, we now suppose that $e_1$ is of type tree.
Then, the type derivation $\Pi$ ends
in an application of the \rulelettreecf-rule.
\begin{equation*}
\small
\infer[\rulelettreecf]{
  \tjudge{\Gamma, \Delta}{Q}{\flstk{let }x\flst{ = }e_1\flstk{ in }e_2}{\beta}{Q'}
}{%
  \begin{minipage}[b]{70ex}
    \centering
    $\tjudge{\Gamma}{P}{e_1}{\TreeShort}{P'}$
    \hfill
    $\forall {\vecb \ne \vec{0},d \ne 0 } \ \left( \tjudgecfndt{\Gamma}{P^{(\vecb,d,e)}}{e_1}{\TreeShort}{{P'}^{(\vecb,d,e)}} \right)$
    \\[1ex]
    $\tjudge{\Delta, \typed{x}{\TreeShort}}{R}{e_2}{\beta}{Q'}$
  \end{minipage}
}
\tkom
\end{equation*}
where we have elided all arithmetic constraints for readability.
By definition and due to the constraints expressed in the typing rule, we have that
\begin{align*}
  \spotential{\Gamma,\Delta}{Q} & =
                                  \sum_i q_i \rk(t_i) + \sum_j q_j \rk(u_j) + {}
                                  \\
  & \qquad \sum_{\veca, \vecb, c} q_{(\veca,\vecb,c)} \log(\veca\size{\vect} + \vecb\size{\vecu} + c)
  \\
  \spotential{\Gamma}{P} & = \sum_i q_i \rk(t_i) + \sum_{\veca, c} q_{(\veca, \vec{0}, c)} \log (\veca \size{\vect} + c)
  \\
  \potential{w}{P'} &= r_{k+1} \rk(w) + \sum_{d,e} r_{(\vec{0},d,e)} \log (d\size{w} +e)
  \\
  \spotential{\Gamma}{P^{(\vecb,d,e)}} &= \sum_{\veca, c} p^{(\vecb,d,e)}_{(\veca,c)} \log (\veca \size{\vect} + c)
  \\
  \potential{w}{{P'}^{(\vecb,d,e)}} &= {p'}^{(\vecb,d,e)}_{(d,\max{(e,0)})} \log (d\size{w} + \max{\{e,0\}})
  \\
      \potential{{\sigma[x \mapsto w]};{\Delta,\typed{x}{\TreeShort}}}{R} & = \sum_{j} q_j \rk(u_j) + r_{k+1} \rk(w) + {}
      \\
      & \qquad \sum_{\vecb, d, e} r_{(\vecb,d,e)} \log(\vecb\size{\vec{u}}+d\size{w}+e)
\tkom
\end{align*}
where we set $\vect \defsym \seq[1]{t}[m]$ and $\vec{u} \defsym \seq[1]{u}[k]$, denoting the substitution instances of the variables in $\Gamma$, $\Delta$, respectively.
(We recall that the well-definedness of $\log$ is implicitly assumed:
$\sum_i a_i + \sum_j b_j + c \geqslant 1$ resp.\ $a + b + c \geqslant 1$ as well as $\sum_j b_j + d + e \geqslant 1$ are satisfied.)

% GM: incorporated
% \todo[F]{Please check the below.
% Note: In order to define the non-deterministic version of a program such that we can directly apply the induction assumption we have to really replace the condition in all if-coins by a non-deterministic condition (in order to ensure that indeed there is a small-step derivation $e_1\sigma \tomulti{0}_n \{w^1\}$ for every $w \in \supp{\rst{\nu}}$).
% That means, we have to extend the syntax and semantics by this non-deterministic condition, which can evaluate to true as well as to false.
% (Which is anyway something some reviewer wanted to see.)
% @Georg: So, should we do this?
% }
% \todo[G]{Yes, but what about the code?}
%
We now consider some $w \in \supp{\rst{\nu}}$.
By definition of non-deterministic (cost-free) version of $e_1$ there is a small-step derivation $e_1\sigma \tomulti{0}_n \{w^1\}$.
Due to the cost-free typing constraints  $\tjudgecfndt{\Gamma}{P^{(\vecb,d,e)}}{e_1}{\TreeShort}{{P'}^{(\vecb,d,e)}}$
and the SIH applied to $e_1\sigma \tomulti{0}_n \{w^1\}$ we have that
\begin{equation*}
\spotential{\Gamma}{P^{(\vecb,d,e)}} \geqslant \potential{w}{{P'}^{(\vecb,d,e)}}
\end{equation*}
for all $\vecb \not= \vec{0},d \neq 0$, ie. that
\begin{equation*}\label{eq:soundness-alt}
    \sum_{\veca,c} p^{(\vecb,d,e)}_{(\veca,c)} \log (\veca \size{\vect} + c) \geqslant
    {p'}^{(\vecb,d,e)}_{(d,\max{\{e,0\}})} \log (d\size{w} + \max{\{e,0\}})
  \tpkt
\end{equation*}

Due to the conditions $\sum_{(a,c)} p^{(\vecb,d,e)}_{(\veca,c)} \geqslant {p'}^{(\vecb,d,e)}_{(d,\max{\{e,0\}}}$, ${p'}^{(\vecb,d,e)}_{(d',e')} = 0$ for all $(d',e') \neq (d,\max{\{e,0\}})$, and
$p^{(\vecb,d,e)}_{(\veca,c)} \neq 0$ implies that
${p'}^{(\vecb,d,e)}_{(d,\max{\{e,0\}}} \leqslant p^{(\vecb,d,e)}_{(\veca,c)}$ for all $\veca \neq 0$,
we can apply Lemma~\ref{lem:log-inequality} to Equation~\eqref{eq:soundness-alt} and obtain
\begin{equation*}
  \sum_{\vec{a}, c} p^{(\vecb,d,e)}_{(\veca,c)} \log (\veca \size{\vect} + \vecb \size{\vecu} + c - \max{\{-e,0\}}) \geqslant
  {p'}^{(\vecb,d,e)}_{(d,\max{\{e,0\}})} \log (\vecb \size{\vecu} + d\size{w} + e)
  \tpkt
\end{equation*}
Note that if $e \geqslant 0$ then  $-\max{\{-e,0\}}=0$ and if $e < 0$ then $-\max{\{-e,0\}}=e$.
Thus in the former we add the sum $\vecb \size{\vecu}$ to both sides of the  inequality~\eqref{eq:soundness-alt}, while in the second case we add $\vecb \size{\vecu} + e$.

Note the conditions
\begin{inparaenum}[(i)]
  \item $q_{(\veca,\vecb,c)} = \sum_{(d,e)} p^{(\vecb,d,e)}_{(\veca,c+\max\{-e,0\})}$ and
  \item $r_{(\vec{b},d,e)} = {p'}^{(\vecb,d,e)}_{(d,\max{\{e,0\}})}$
\end{inparaenum}
for all $\vec{b} \ne \vec{0}, \vec{a} \ne \vec{0}, d \neq 0$.

Thus, we can sum up those equations for all $\vec{b} \ne \vec{0}, \vec{a} \ne \vec{0}, d \neq 0$ and obtain that
\begin{equation*}
  \sum_{\vec{b} \ne \vec{0}, \vec{a} \ne \vec{0}} q_{(\veca,\vecb,c)} \log (\veca \size{\vect} + \vecb \size{\vecu} + c) \geqslant
    \sum_{\vec{b} \ne \vec{0},d \ne 0} r_{(\vec{b},d,e)} \log (\vecb \size{\vecu} + d\size{w} + e)
  \tpkt
\end{equation*}
Because the above equation holds for any $w \in \supp{\rst{\nu}}$ we can deduce that
\begin{gather*}
  \sum_{\veca \ne \vec{0}, \vecb \ne \vec{0}, c} q_{(\veca,\vecb,c)} \log (\veca \size{\vect} + \vecb \size{\vecu} + c) \geqslant {} \notag
  \\ {} \geqslant \sum_{w \in \supp{\rst{\nu}}} \nu(w) \cdot \left(  \sum_{\veca \ne \vec{0}, \vec{b} \ne \vec{0}, c} q_{(\veca,\vecb,c)} \log (\veca \size{\vect} + \vecb \size{\vecu} + c) \right)
  \\ {} \geqslant \sum_{w \in \supp{\rst{\nu}}} \nu(w) \cdot \bigl(
    \sum_{\vecb \ne \vec{0},d \ne 0,e} r_{(\vec{b},d,e)} \log (\vecb \size{\vecu} + d\size{w} + e)
    \bigr)
  \tkom
\end{gather*}
using that $\nu$ is a (sub-)distribution, \ie{}, that $\sum_{w \in \supp{\rst{\nu}}} \nu(w) \le 1$.
We now note that $(\star)$ follows directly from  the above inequality and the constraints in the $\rulelettreecf$ together with the fact that $\nu$ is a (sub-)distribution.

\medskip
\emph{Case.} Let $e$ be a conditional and assume the last rule in $\Pi$ is of the following form:
\begin{equation*}
  \infer[\ruleite]{
  \tjudge{\Gamma, \typed{x}{\BoolShort}}{Q}{\flstk{if }x\flstk{ then }e_1\flstk{ else }e_2}{\alpha}{Q'}
}{%
  \tjudge{\Gamma}{Q}{e_1}{\alpha}{Q'}
  &
  \tjudge{\Gamma}{Q}{e_2}{\alpha}{Q'}
}

  \tpkt
\end{equation*}
(The case where the condition is performed non-deterministically, is treated analogousloy.) By assumption, we either have
\begin{enumerate}[(i)]
  \item $\{e\sigma^1 \} = \{ \flstk{if }\true \flstk{ then }e_1\flstk{ else }e_2 \} \toop{0} \{ e_1 \} \tomulti{c}_{n-1}  \mu$ or
  \item $\{e\sigma^1 \} = \{ \flstk{if }\false \flstk{ then }e_1\flstk{ else }e_2 \} \toop{0} \{ e_2 \} \tomulti{c}_{n-1}  \mu$.
  \end{enumerate}
  In both case MIH yields that $\spotential{\Gamma}{Q} \geqslant c + \sum_{v \in \supp{\rst{\mu}}} \mu(p) \cdot \potential{v}{Q'}$, from which the theorem follows as $\spotential{\Gamma, \typed{x}{\BoolShort}}{Q} = \spotential{\Gamma}{Q}$ by definition.

\emph{Case.} Let $e$ be ticking statement and let the last rule in $\Pi$ be of the following form:
\begin{equation*}
  % GM tick in LNF
% \infer[\ruletick]{%
%   \tjudge{\emptyctx}{Q}{\tick[$a$][$b$]}{\Unit}{Q'}
% }{
% q_c  = q'_c + \sfrac{a}{b}
% }
% Γ | Q |- e : α | Q'
% -------------------------------- (tick)
% Γ | Q |- tick c e : α | Q' - c
%
\infer[\ruletick]{%
  \tjudge{\Gamma}{Q+\sfrac{a}{b}}{\tick[$a$][$b$]{e}}{\alpha}{Q'}
}{%
  \tjudge{\Gamma}{Q}{e}{\alpha}{Q'}
}
  \tpkt
\end{equation*}
By definition, we have $\{ \tick{e\sigma} \} \toop{\sfrac{a}{b}} \{ e\sigma \} \tomulti{c-\sfrac{a}{b}}_{n-1}  \mu$ and by MIH, we have
\begin{equation*}
  \spotential{\Gamma}{Q} \geqslant (c-\sfrac{a}{b}) + \sum_{v \in \supp{\rst{\mu}}} \mu(p) \cdot \potential{v}{Q'}
  \tpkt
\end{equation*}
Hence $\spotential{\Gamma}{Q+\sfrac{a}{b}} \geqslant c + \sum_{v \in \supp{\rst{\mu}}} \mu(p) \cdot \potential{v}{Q'}$.

\medskip
\emph{Case}. Let $e$ be a probabilistic branching statement, that is
\begin{equation*}
  e\sigma = \flstk{if } \coin[$a$][$b$] \flstk{ then }e_1\flstk{ else }e_2
  \tkom
\end{equation*}
and let the last rule in $\Pi$ be of the following form
\begin{equation*}
  \small
  % GM: itecoin need to be the following
%
% Γ | U1 |- e1 : α | Q
%  Γ | U2 |- e2 : α | Q
%  U = p * U1 + (1 - p) * U2
% --------------------------------------------- (ite:coin)
% Γ | U |- if coin p then e1 else e2 : α | Q
\infer[\rulecoin]{%
   \tjudge{\Gamma}{Q}{\flstk{if } \coin[$a$][$b$] \flstk{ then }e_1\flstk{ else }e_2}{\alpha}{Q'}
}{%
   \tjudge{\Gamma}{Q_1}{e_1}{\alpha}{Q'}
  &
  \tjudge{\Gamma}{Q_2}{e_2}{\alpha}{Q'}
  &
  p = \sfrac{a}{b}
  &
  Q = p \cdot Q_1 + (1 - p) \cdot Q_2
}

  \tpkt
\end{equation*}
By definition, we have $\{ {e\sigma} \} \toop{0} \{ e_1^p, e_2^{1-p} \} \tomulti{c}_{n-1}  \mu$.
By definition
of $\tomulti{\cdot}$ there exists (sub)distribution $\mu_1$, $\mu_2$ s.t. $\{e_1\} \tomulti{c_1}_{m_1}  \mu_1$
and $\{e_2\} \tomulti{c_2}_{m_2}  \mu_2$, where $m_1,m_2 < n$, $\mu = p \cdot \mu_1 + (1-p) \cdot \mu_2$
and $c = p \cdot c_1 + (1-p) \cdot c_2$. Further by SIH, we conclude
\begin{inparaenum}[(i)]
  \item $\spotential{\Gamma}{Q_1} \geqslant c_1 + \sum_{v \in \supp{\rst{\mu_1}}} \mu_1(p) \cdot \potential{v}{Q'}$ and
  \item $\spotential{\Gamma}{Q_2} \geqslant c_2 + \sum_{v \in \supp{\rst{\mu_2}}} \mu_2(p) \cdot \potential{v}{Q'}$.
\end{inparaenum}
Hence, we obtain
\begin{align*}
  \spotential{\Gamma}{Q} & = \spotential{\Gamma}{p \cdot Q_1 + (1-p) \cdot Q_2}
  \\
                         & = p \cdot \spotential{\Gamma}{Q_1} + (1-p) \cdot \spotential{\Gamma}{Q_2}
  \\
                         & \geqslant c_1 + c_2 + (p \cdot \sum_{v^{q} \in \rst{\mu_1}} q \cdot \potential{v}{Q'}) +
                           ((1-p) \cdot \sum_{v^{q} \in \rst{\mu_2}} q \cdot \potential{v}{Q'})
  \\
                         & = c + \sum_{v^{q} \in \rst{\mu_1}} p \cdot q \cdot \potential{v}{Q'} +
                           \sum_{v^{q'} \in \rst{\mu_2}} (1-p) \cdot q \cdot \potential{v}{Q'}
  \\
                         & = c + \sum_{v^{q} \in p \cdot \rst{\mu_1} + (1-p) \cdot \rst{\mu_2}} q \cdot \potential{v}{Q'} = c + \sum_{v^q \in \rst{\mu}}  q \cdot \potential{v}{Q'}
                           \tkom
\end{align*}
from which we conclude the case.

\medskip
\emph{Case}. We consider the application rules~\ruleapp\ and~\ruleappcf\ and restrict our argument to the former, as the
proof for the cost-free variant is similar, but simpler. We consider the costed typing
\begin{equation*}
  \small
  % app rule without ticks
\infer[\ruleapp]{%
  \tjudge{\typed{x_1}{\alpha_1},\dots,\typed{x_n}{\alpha_n}}{(P +  K \cdot Q)}{f(\seq{x})}{\beta}{(P' +  K \cdot Q')}
}{%
  \atypdcl{\alpha_1 \times \cdots \times \alpha_n}{P}{\beta}{P'} \in \FS(f)
  &
  \atypdcl{\alpha_1 \times \cdots \times \alpha_n}{Q}{\beta}{Q'} \in \FScf(f)
  &
  K \in \Qplus
}
  \tpkt
\end{equation*}
Let $f(\seq{x}[k]) = e \in \Program$, as $\Program$ is well-typed, we have $\tjudge{\Gamma}{P}{e}{\beta}{P'}$ and
$\tjudgecf{\Gamma}{Q}{e}{\beta}{Q'}$ by assumption. Further, by definition $\{ e\sigma \} \toop{0} e \tomulti{c}_{n-1} \mu$.
We conclude by MIH that $\spotential{\Gamma}{P} \geqslant c + \sum_{v \in \supp{\rst{\mu}}} \mu(p) \cdot \potential{v}{P'}$ and
$\spotential{\Gamma}{Q} \geqslant \sum_{v \in \supp{\rst{\mu}}} \mu(p) \cdot \potential{v}{Q'}$.
Hence
\begin{align*}
  \spotential{\Gamma}{P + K \cdot Q} & = \spotential{\Gamma}{P} + K \cdot  \spotential{\Gamma}{Q}
  \\
                                     & \geqslant c + \sum_{v \in \supp{\rst{\mu}}} \mu(p) \cdot \potential{v}{P'} + K \cdot \sum_{v \in \supp{\rst{\mu}}} \mu(p) \cdot \potential{v}{Q'}
  \\
                                     & = c +  \sum_{v \in \supp{\rst{\mu}}} \mu(p) \cdot ( \potential{v}{P'} + K \cdot \potential{v}{Q'} )
  \\
                                     & = c + \sum_{v \in \supp{\rst{\mu}}} \mu(p) \cdot \potential{v}{P' + K \cdot Q'}
                                       \tpkt
\end{align*}
\qed
\end{proof}

\restatethmtickdefersound*
\begin{proof}
The setup (and most of the cases) of this proof follow the proof of Theorem~\ref{t:1}:
It suffices to prove for every $n \geqslant 0$ that
\begin{equation*}
  \evaln{\sigma}{n}{c}{e}{\mu} \mu \Rightarrow \spotential{\Gamma}{Q} \geqslant c + \Expect{\mu}{\lambda v. \potential{v}{Q'}}
\end{equation*}
We proceed by main induction on $n$ ---which we will call main induction hypothesis (MIH)--- and side-induction on the length of the type derivation $\Pi$ of $\tjudge{\Gamma}{Q}{e}{\alpha}{Q'}$ ---which we will call side induction hypothesis (SIH).

For the majority of the cases, the arguments can be easily suited from those employed in proof
of Theorem~\ref{t:1}.
Thus, we only consider a restricted set of cases that may be of independent interest.

We now consider $\evaln{\sigma}{n+1}{c}{e}{\mu}$ for some $n \geqslant 0$ and the type derivation $\Pi$ of $\tjudge{\Gamma}{Q}{e}{\alpha}{Q'}$.
The cases where $\Pi$ ends in a structural rule, cf.~Figure~\ref{fig:5b}, can be dealt with in the same way as in the proof of Theorem~\ref{t:1}.

We now assume that $\Pi$ ends in a syntax-directed rule, cf.~Figure~\ref{fig:5}, and proceed by a case distinction on $e\sigma$, respectively the first step of $\evaln{\sigma}{n+1}{c}{e}{\mu}$:

\medskip
\emph{Case.} Consider
\begin{equation*}
e = \flstk{let }x\flst{ = }e_1\flstk{ in }e_2.
\end{equation*}
Let $w$ be some value.
We apply the MIH to $\evaln{\sigma[x \mapsto w]}{n}{c_w}{e_2}{\mu_w}$ and obtain that
\begin{gather*}
  \evaln{\sigma[x \mapsto w]}{n}{c_w}{e_2}{\mu_w}\\
  {} \Rightarrow \potential{{\sigma[x \mapsto w]};{\Delta, \typed{x}{\alpha}}}{R} \geqslant c_w + \sum_{v \in \supp{{\mu_w\restriction_{V}}}} \mu_w(v) \cdot \potential{v}{Q'} \tag{$\dag$}
  \tkom
\end{gather*}
for suitably defined distributions $\mu_w$ and costs $c_w$.
We further apply the MIH to $\evaln{\sigma}{n}{c_1}{e_1}{\nu}$, ie.\ we obtain that
\begin{equation*}
  \evaln{\sigma}{n}{c_1}{e_1}{\nu} \Rightarrow
  \potential{{\sigma};{\Gamma}}{P} \geqslant c_1 + \sum_{w \in \supp{{\nu}\restriction_{V}}} \nu(w) \cdot \potential{w}{P'} \tag{$\ddag$}
  \tpkt
\end{equation*}
Further, we will establish below that
\begin{multline*}
      \spotential{\Gamma,\Delta}{Q} + \sum_{w \in \supp{{\nu}\restriction_{V}}} \nu(w) \cdot \potential{w}{P'} \ge  \\
   \potential{{\sigma};{\Gamma}}{P} +
    \sum_{w \in \supp{{\nu}\restriction_{V}}} \nu(w) \cdot
                                  \potential{{\sigma[x \mapsto w]};{\Delta, \typed{x}{\alpha}}}{R} \tag{$\star$} \tpkt
\end{multline*}

We finally calculate using $(\dag)$, $(\ddag)$ and $(\star)$ that
\begin{align*}
  \spotential{\Gamma,\Delta}{Q} & \geqslant c_1 + \sum_{w \in \supp{{\nu}\restriction_{V}}} \nu(w) \cdot
                                  \potential{{\sigma[x \mapsto w]};{\Delta, \typed{x}{\alpha}}}{R}
  \\
  & \geqslant c_1 + \sum_{w \in \supp{{\nu}\restriction_{V}}} \nu(w) \cdot ( c_w + \sum_{v \in \supp{{\mu_w\restriction_{V}}}} \mu_w(v) \cdot \potential{v}{Q'})
  \\
  & = c_1 + \sum_{w \in \supp{{\nu}\restriction_{V}}} \nu(w) \cdot c_w +
  \\
  & \quad \quad  + \sum_{w \in \supp{{\nu}\restriction_{V}}} \sum_{v \in \supp{{\mu_w\restriction_{V}}}} \nu(w) \cdot \mu_w(v) \cdot \potential{v}{Q'}
  \\
  & = c + \sum_{v \in \supp{{\mu}\restriction_{V}}} \mu(v) \cdot \potential{v}{Q'})
    \tpkt
\end{align*}
where we have used for the last equality that $\mu = \sum_{w \in \supp{\nu}} \nu(w) \cdot \mu_w$ and
$c= c_2 + \sum_{w \in \supp{\nu}} \nu(w) \cdot c_w$ according to the definition of the big-step semantics.

We finally note that $(\star)$ can be established in the same way as in the proof of Theorem~\ref{t:1} (for both the $\ruleletnotree$- and the $\rulelettreecf$-rule case).

\medskip
\emph{Case.} Let $e$ be a probabilistic branching statement, that is
\begin{equation*}
  e\sigma = \flstk{if } \coin[$a$][$b$] \flstk{ then }e_1\flstk{ else }e_2
  \tkom
\end{equation*}
and let the last rule in $\Pi$ be of the following form
\begin{equation*}
  \small
  
  \tpkt
\end{equation*}
By definition, there exists distributions $\mu_1$ and $\mu_2$ such that $\evaln{\sigma}{n}{c_1}{e_1}{\mu_1}$,
$\evaln{\sigma}{n}{c_2}{e_2}{\mu_2}$, $\mu = p \cdot \mu_1 + (1-p) \cdot \mu_2$ and $c = p \cdot c_1 + (1-p) \cdot c_2$.
By MIH, we conclude
\begin{equation*}
  \spotential{\Gamma}{Q} \geqslant c_1 + \Expect{\mu_1}{\lambda v. \potential{v}{Q'}}
  \qquad
  \spotential{\Gamma}{Q_2} \geqslant c_2 + \Expect{\mu_2}{\lambda v. \potential{v}{Q'}}
\end{equation*}
Hence, we obtain
\begin{align*}
  \spotential{\Gamma}{Q} & = \spotential{\Gamma}{p \cdot Q_1 + (1-p) \cdot Q_2}
  \\
                         & = p \cdot \spotential{\Gamma}{Q_1} + (1-p) \cdot \spotential{\Gamma}{Q_2}
  \\
                         & \geqslant p \cdot c_1 + (1-p) \cdot c_2 + (p \cdot \sum_{v^{q} \in \mu_1} q \cdot \potential{v}{Q'}) +
                           ((1-p) \cdot \sum_{v^{q} \in \mu_2} q \cdot \potential{v}{Q'})
  \\
                         & = c + \sum_{v^{q} \in \mu_1} p \cdot q \cdot \potential{v}{Q'} +
                           \sum_{v^{q'} \in \mu_2} (1-p) \cdot q \cdot \potential{v}{Q'}
  \\
                         & = c + \sum_{v^{q} \in p \cdot \mu_1 + (1-p) \cdot \mu_2} q \cdot \potential{v}{Q'} = c + \sum_{v^q \in \mu}  q \cdot \potential{v}{Q'}
                           \tkom
\end{align*}
from which we conclude the case.

\medskip
\emph{Case.}
Suppose the first step in the derivation of $\evaln{\sigma}{n+1}{c}{e}{\mu}$ is
\begin{equation*}
  %FMGM2
\infer
  {\evaln{\sigma}{n+1}{c+\prob{\mu} \cdot \sfrac{a}{b}}{\tick[$a$][$b$]{e}}{\mu}}
  {\evaln{\sigma}{n}{c}{e}{\mu}}
  \tpkt
\end{equation*}
and that $\Pi$ ends with the rule
\begin{equation*}
  % GM tick in LNF
% infer[\ruletickast]{%
%   \tjudge{\emptyctx}{Q}{\tick[$a$][$b$]}{\Unit}{Q'}
% }{
% q'_c = q_c - \sfrac{a}{b}
% }
\infer[\ruletickast]{%
  \tjudge{\Gamma}{Q}{\tick[$a$][$b$]{e}}{\alpha}{Q'-\sfrac{a}{b}}
}{%
  \tjudge{\Gamma}{Q}{e}{\alpha}{Q'}
}

  \tpkt
\end{equation*}
By MIH, we obtain $\spotential{\Gamma}{Q} \geqslant c + \Expect{\mu}{\lambda v. \potential{v}{Q'}}$,
from which we conclude that
\begin{align*}
  \spotential{\Gamma}{Q} & \geqslant c + \sum_{v \in \supp{\mu}} \mu(v) \cdot \potential{v}{Q'}
  \\
                         & = c + \sum_{v \in \supp{\mu}} \mu(v) \cdot
                           \bigl( \potential{v}{Q'} - \sfrac{a}{b} + \sfrac{a}{b} \bigr)
  \\
                         & = c + \sum_{v \in \supp{\mu}} \mu(v) \sfrac{a}{b} +
                           \sum_{v \in \supp{\mu}} \mu(v) \cdot \bigl( \potential{v}{Q'} - \sfrac{a}{b} \bigr)
  \\
                         & = c + \prob{\mu} \cdot  \sfrac{a}{b}  +
                           \sum_{v \in \supp{\mu}} \mu(v) \cdot \potential{v}{Q' - \sfrac{a}{b}}
  \\
                         &= c + \prob{\mu} \cdot  \sfrac{a}{b} + \Expect{\mu}{\lambda v. \potential{v}{Q'-\sfrac{a}{b}}}
                           \tpkt
\end{align*}
Here, we exploit the definition of $Q'-\sfrac{a}{b}$ and the definition of $\prob{\mu}$
in the second-to-last line and the definition of expectations in the last line.
\end{proof}

\section{Function Definitions}

Below, we use a notation for ticks that is easier to type with standard keyboard layouts,
\ie{} the tilde symbol followed by cost and the subexpression, \texttt{$\sim$ $a$/$b$ $e$},
instead of a tick mark and cost in the superscript, $\tick[$a$][$b$]{e}$.

% NOTE(lorenz): Generation files we're including below is in /snippets.sh, see /README.md
% NOTE(lorenz): Small font for all listings below, also Tactics.
\lstset{basicstyle=\scriptsize\ttfamily}
% NOTE(lorenz): Poor man's \lstlinline now that we changed `basicstyle`.
\newcommand{\inline}[1]{{\ttfamily#1}}

\paragraph{Randomised Splay Trees.} The benchmark comprises the functions
\inline{splay},\\
\inline{splay\_max},
\inline{insert}, and
\inline{delete}, see Figure~\ref{fig:appendix-randsplaytree}.
% TODO(lorenzleutgeb): Fix figure numbering (final version).

\paragraph{Randomized Splay Heaps.} The benchmark comprises the functions
\inline{insert}, and
\inline{delete\_min}, see Figure~\ref{fig:appendix-randsplayheap}.

\paragraph{Meldable Heaps.} The benchmark comprises the functions
\inline{meld},
\inline{insert}, and
\inline{delete\_min}, see Figure~\ref{fig:appendix-randmeldableheap}.

\paragraph{Coin Search Tree.} The benchmark comprises the functions
\inline{insert},
\inline{contains},
\inline{delete}, and
\inline{delete\_max}, see Figure~\ref{fig:appendix-coinsearchtree}.

\paragraph{Tree.} The benchmark comprises the function
\inline{descend}, see Figure~\ref{fig:appendix-tree}.

\begin{figure}[p]
\lstinputlisting[mathescape,breakatwhitespace,firstline=1,lastline=52]{snippets/RandSplayTree.ml}
%\ContinuedFloat
%\label{fig:st}
\end{figure}
\begin{figure}[p]
\ContinuedFloat
\lstinputlisting[mathescape,breakatwhitespace,firstline=53,lastline=90,firstnumber=53]{snippets/RandSplayTree.ml}
%\label{fig:st}
\end{figure}
\begin{figure}[p]
\ContinuedFloat
\lstinputlisting[mathescape,breakatwhitespace,firstline=92,firstnumber=92]{snippets/RandSplayTree.ml}
\setcounter{figure}{1}
\caption{Module \inline{RandSplayTree}.}
\label{fig:appendix-randsplaytree}
\end{figure}
\begin{figure}[t]
\lstinputlisting[mathescape]{snippets/RandSplayHeap.ml}
\caption{Module \inline{RandSplayHeap}.}
\label{fig:appendix-randsplayheap}
\end{figure}
\begin{figure}[t]
\lstinputlisting{snippets/RandMeldableHeap.ml}
\caption{Module \inline{RandMeldableHeap}.}
\label{fig:appendix-randmeldableheap}
\end{figure}
\begin{figure}[t]
\lstinputlisting{snippets/CoinSearchTree.ml}
\caption{Module \inline{CoinSearchTree}.}
\label{fig:appendix-coinsearchtree}
\end{figure}
\begin{figure}[t]
\lstinputlisting{snippets/RandTree.ml}
\caption{Module \inline{RandTree}.}
\label{fig:appendix-tree}
\end{figure}

}{}
\end{document}